\pgfplotsset{compat=newest}
\newcolumntype{C}[1]{>{\centering\arraybackslash}p{#1}}
\newtheorem{theorem}{Theorem}
\newtheorem{definition}{Definition}
\newtheorem{lemma}{Lemma}
\newtheorem{remark}{Remark}
\newtheorem{assumption}{Assumption}
\newlength\figureheight
\newlength\figurewidth
\DeclareMathOperator*{\dist}{dist_+}
\DeclareMathOperator*{\R}{\mathcal{R}}
\newcommand{\Rmnum}[1]{\expandafter\@slowromancap\romannumeral #1@}
\title{Secure Synchronization of Heterogeneous Pulse-Coupled Oscillators}
\author{Jiaqi Yan and Hideaki Ishii
	\thanks{J. Yan is with the Automatic Control Laboratory, ETH Zurich, Switzerland ({\small jiayan@ethz.ch}). H. Ishii is with the Department of Computer Science, Tokyo Institute of Technology, Japan ({\small ishii@c.titech.ac.jp}).
	}%
	\thanks{This work was also supported by the Swiss National Science Foundation through NCCR Automation under Grant agreement 51NF40\_180545, and by JSPS under
		Grants-in-Aid for Scientific Research Grant No. 22H01508 and 21F40376.}
}
\begin{document}
	\maketitle
	
	\begin{abstract}
In this paper, we consider the synchronization of heterogeneous pulse-coupled oscillators (PCOs), where some of the oscillators might be faulty or malicious. The oscillators interact through identical pulses at discrete instants and evolve continuously with different frequencies otherwise. Despite the presence of misbehaviors, benign oscillators aim to reach synchronization. To achieve this objective, two resilient synchronization protocols are developed in this paper by adapting the real-valued mean-subsequence reduced (MSR) algorithm to pulse-based interactions. The first protocol relies on packet-based communication to transmit absolute frequencies, while the second protocol operates purely with pulses to calculate relative frequencies. In both protocols, each normal oscillator periodically counts the received pulses to detect possible malicious behaviors. By disregarding suspicious pulses from its neighbors, the oscillator updates both its phases and frequencies. The paper establishes sufficient conditions on the initial states and graph structure under which resilient synchronization is achieved in the PCO network. Specifically, the normal oscillators can either detect the presence of malicious nodes or synchronize in both phases and frequencies. Additionally, a comparison between the two algorithms reveals a trade-off between relaxed initial conditions and reduced communication burden.

	\end{abstract}
	
	\section{Introduction}\label{sec:intro}
	Inspired by biological phenomena such as flashing fireflies and contracting cardiac cells,
	pulse-based synchronization becomes a popular topic
	in the systems control community (\hspace{1pt}\cite{mirollo1990synchronization,dorfler2014synchronization,qi2021synthetic}).
	By interacting through simple pulses, the pulse-based synchronization incurs lower energy consumption than its packet-based counterpart. 
	
	In recent years, extensive results have been reported in this area. In \cite{proskurnikov2016synchronization}, the authors have proven 
	that the existence of a directed spanning tree in interaction graph is sufficient and necessary for reaching synchronization. The work \cite{wang2012optimal} maximizes synchronization rate of the PCO network by optimizing the phase response function. 
    Other relevant results can be found in \cite{hu2006scalability,wang2012increasing,qi2021synchronization,kannapan2016synchronization}.
	
	In most of the existing works, the oscillator immediately adjusts its own phase once a pulse signal is received. Moreover, each incoming pulse results in an update on the receiving oscillator's phase. As such, this mechanism increases vulnerabilities of networks in hostile environments. In fact, even a single malicious oscillator is able to destroy the synchronization among normal oscillators by continuously emitting false pulses \cite{wang2020attack}. Motivated by this observation, recent research attention has been paid to developing secure pulse-based algorithms to facilitate the resilient synchronization in an adversarial environment. 
	Most approaches adopt an idea that lets a normal node update its phase only when it receives sufficient number of pulses. Therefore, the communication graph is usually required to be ``sufficiently connected''. For example, \cite{wang2018pulse} and \cite{wang2020global} respectively develop resilient synchronization algorithms in complete and dense graphs, where each oscillator has abundant neighbors. On the other hand, the works \cite{iori2020resilient,iori2021resilient} have characterized the resiliency of algorithms in terms of the network robustness. 
	
	The aforementioned strategies guarantee the resilient synchronization in homogeneous PCO networks where oscillators have identical frequencies. However, they may fail in the network where the oscillators possess frequency heterogeneity--an aspect often expected in real physical systems \cite{nishimura2015frequency,an2009nonidentical}. As such, this paper focuses on the PCO network of heterogeneous oscillators. We aim to propose secure algorithms to enhance the resiliency of this network and facilitate the resilient synchronization of heterogeneous oscillators.

	Our approach is developed based on a class of algorithms which is known as the mean-subsequence reduced (MSR) algorithm. The MSR algorithm is introduced in an early work \cite{dolev1986reaching}. Since then, it has been widely applied to solve the problem of resilient consensus in multi-agent systems, where the state of each agent is represented by a real-valued message  (see \cite{kieckhafer1994reaching,vaidya2012iterative,leblanc2013resilient,dibaji2015consensus,yan2021resilient,yan2020resilient}). The essential idea is that each normal agent discards the most extreme values in its neighborhood and updates its own state with the remaining values only. By doing so, the MSR algorithm overrules the significant effects from the misbehaving nodes and guarantees the normal ones to achieve an agreement even when malicious data is present in the network. 
	
	The MSR algorithm has demonstrated decent performance in tackling real-valued consensus problems. However, the pulse-coupled synchronization problem is characteristic in the following aspects, which prevent the MSR algorithms from being directly applied: 
	
	1) Different from the real-valued consensus settings where nodes always know sources of the received data, the pulses in PCO networks are identical and do not contain the identities of senders. Utilizing this feature, a malicious node has the capacity to transmit numerous deceitful pulses within a short period, thereby deteriorate the synchronization to a greater degree.
	
	2) In the PCO network, the emission of pulses occurs at distinct time instants. As a result, each oscillator should make updates with a set of asynchronous messages. 
	
	3) The pulse-coupled oscillators feature hybrid dynamics: their states are governed by impulsive signals at discrete instants and evolve continuously with their frequencies otherwise. 
	
	4) In heterogeneous PCO networks, there exists a coupling between oscillators' phases and frequencies. This introduces additional complexity to the algorithm design and analysis due to the diversity in frequencies.

 To overcome these difficulties, we tailor the conventional MSR algorithms and propose novel pulse-based synchronization algorithms which are resilient to attacks by extending the approach of \cite{iori2020resilient,iori2021resilient} to the heterogeneous PCO networks. Our main contributions are summarized as follows:	
 
1) Io enhance resiliency of the PCO network, we first develop a detection method to identify malicious nodes emitting multiple pulses per round. Additionally, to counter stealthy attacks that evade detection, we propose two resilient pulse-based protocols using tailored MSR algorithms. Algorithm~\ref{alg:synchronize} relies on packet-based communications for absolute frequencies, while Algorithm~\ref{alg:relative} performs solely through pulses, allowing for the calculation of relative frequencies. Due to the utilization of pure pulses, Algorithm~\ref{alg:relative} consumes less energy, but demands stricter constraints on initial states.
 
2) The proposed algorithms require normal oscillators to update states asynchronously upon receiving a sufficient pulse count, leading to delayed information updates. The hybrid nature and coupling between phases and frequencies present challenges, making conventional MSR algorithm analysis methods unsuitable. To overcome this, we extend findings from \cite{kikuya2017fault} on real-valued messages to pulse-based transmissions.  and leverage hybrid systems theory to tailor them for supporting our proofs.
 
3) Under certain conditions related to initial states and graph structure, our algorithms enable the normal oscillators to either detect misbehaving nodes or achieve resilient synchronization on both phases and frequencies, even in the
presence of malicious signals and frequency heterogeneity. To the best of our knowledge, this study represents the first investigation that considers both heterogeneous dynamics and malicious behaviors in PCO networks.

The rest of this paper is organized as follows. Section~\ref{sec:pre} first presents preliminaries on graph robustness. We then introduce the problem of resilient synchronization in heterogeneous PCO networks in Section~\ref{sec:form}. Resilient synchronization algorithms are developed based on absolute frequencies and relative frequencies respectively in Sections~\ref{sec:alg} and \ref{sec:extend}.
After that, we verify the established results through numerical examples in Section~\ref{sec:simulation} and conclude the paper in Section~\ref{sec:conclude}.

	\section{Preliminaries: Network Robustness}\label{sec:pre}
	Let us consider a digraph $\mathcal{G}=(\mathcal{V},\mathcal{E})$, where $\mathcal{V}$ is the set of nodes, and $\mathcal{E}\subseteq \mathcal{V}\times\mathcal{V}$ is the set of edges. The edge $e_{ij}\in \mathcal{E}$ indicates that node $i$ can directly receive information from node $j$. The set of in-neighbors and out-neighbors of node $i$ are respectively defined as $$\mathcal{N}_i^+\triangleq\{j\in \mathcal{V}|e_{ij}\in \mathcal{E}\}, \;\mathcal{N}_i^-\triangleq\{j\in \mathcal{V}|e_{ji}\in \mathcal{E}\}.$$ Moreover, let us define the in-degree of node $i$ as $d_i\triangleq |\mathcal{N}_i^+|.$
	
	As one might imagine, there is a close coupling between network topology and the maximum number of tolerable faulty nodes. In this paper, we study the resiliency of algorithms in terms of network robustness, as formally defined below:
	\begin{definition}($r$-robustness):
		A digraph $\mathcal{G}=(\mathcal{V},\mathcal{E})$ is $r$-robust, if for any pair of disjoint and nonempty subsets $\mathcal{V}_1, \mathcal{V}_2\subsetneq\mathcal{V}$, at least one of the following statements hold:
		\begin{enumerate}
			\item There exists a node in $\mathcal{V}_1$ such that it has at least $r$ in-neighbors outside $\mathcal{V}_1$.
			\item There exists a node in $\mathcal{V}_2$ such that it has at least $r$ in-neighbors outside $\mathcal{V}_2$.
		\end{enumerate}
	\end{definition}
	
Intuitively, network robustness measures the connectivity of graphs. It requires that for any two disjoint and nonempty subsets of nodes, there is at least one node within these sets that has a sufficient number of in-neighbors from outside. This drives the node away from the values of its own set and towards values of the other set, which is critical for achieving synchronization in the network.

We also need the following lemma:
	
	\begin{lemma}[\hspace{0.5pt}\cite{leblanc2013resilient}]\label{lmm:r'}
		If a digraph $\mathcal{G}=(\mathcal{V},\mathcal{E})$ is $r$-robust, it is also $r'$-robust for any $r'\leq r$.
	\end{lemma}

	\section{Problem Formulation}\label{sec:form}
In this paper, we focus on a network of $N$ pulse-coupled oscillators (PCOs). The PCOs interact over a directed graph $\mathcal{G}=(\mathcal{V},\mathcal{E})$, where any node in $\mathcal{V}=\{1,\cdots,N\}$ corresponds to an oscillator in the network. Each oscillator $i$ is equipped with a phase variable $\phi_i(t)$ that evolves from $0$ to $1$ with absolute frequency $\omega_i(t)$. When the phase reaches $1$, oscillator $i$ emits a pulse (or fires) and resets its phase to $0$. An edge $e_{ij}\in \mathcal{E}$ indicates that oscillator $i$ can receive pulses from oscillator $j$. The received pulses lead to adjustments of the oscillator's phase and frequency, which can be designed to achieve collective behavior, such as synchronization. We assume
\begin{equation}\label{eqn:initialphase}
\min_{i\in\mathcal{V}} \phi_i(0)=0,\;\max_{i\in\mathcal{V}} \phi_i(0)<0.5.
\end{equation}
Without loss of generality, we define the direction in which the phase of the node continuously increases from $0$ to $1$ as clockwise.

Previous research \cite{hu2006scalability,wang2012increasing,wang2012optimal} has proposed various algorithms to synchronize networks of homogeneous oscillators. In contrast, this paper focuses on the synchronization of \textit{heterogeneous} PCOs, where the oscillators have different initial frequencies $\{\omega_i(0)\}$. To avoid loss of generality, we normalize the initial frequencies to be within the range 
\begin{equation}
\Omega \triangleq [1,1+\delta(0)],
\end{equation}
where $\delta(0)$ represents the maximum difference in the initial frequencies of any two oscillators.

%

	\subsection{Attack model}
This paper studies the synchronization of PCOs in an adversarial environment, where some of the oscillators might be behaving incorrectly or even maliciously.
	
To be specific, let $\mathcal{F}$ be the set of misbehaving oscillators. An oscillator $i\in \mathcal{F}$ can either be one that is manipulated by an adversary, or one that fails to follow the pre-defined updating rule. In contrast, normal oscillators always adopt the designed synchronization algorithm. We collect these normal nodes in a set $\mathcal{R}$. For notation convenience, let us define $R$ as the cardinality of $\mathcal{R}$, i.e., the number of normal oscillators in the network.
	
We consider the following attack model, which is widely adopted in the real-valued resilient distributed control problem (see \cite{leblanc2013resilient,sundaram2018distributed,yan2020containment} for examples):
	\begin{assumption}\label{assum:attackmodel}
		The network is under an $f$-local attack\footnote{Note that the $f$-local attack model assumed here is more general than the $f$-total model defined in \cite{leblanc2013resilient}, where the total number of adversarial nodes is upper bounded by $f$.}. That is, for any normal oscillator $i\in\R$, there are no more than $f$ misbehaving nodes in its in-neighborhood, namely, $|\mathcal{F}\cap\mathcal{N}^+_i|\leq f$.
	\end{assumption}
	
This attack model is reasonable since the adversary usually has limited resources and can only manipulate a limited number of nodes. Therefore, the maximum number of misbehaving neighbors should generally be upper bounded. Note that despite this bound, we do not impose any restrictions on the behaviors of misbehaving nodes. They are allowed to emit pulses at arbitrary instants, thus conveying false information about their phases/frequencies to other oscillators. 

\subsection{Resilient synchronization}
	
	The network misbehavior would jeopardize the synchronization among normal oscillators. To address this issue, we aim to propose resilient synchronization algorithms that can work correctly even in the presence of the $f$-local attack. To see this, the concept of a normal containing arc is first introduced as follows:
	
	\begin{definition}[Normal containing arc]\label{def:containarc}
		The normal containing arc is the shortest arc that contains the phases of all normal
		oscillators. Moreover, let $\Delta(t)$ be the length of the normal containing arc at time $t$.
	\end{definition}
 
The main problem addressed in this paper is the resilient synchronization of normal oscillators in the presence of misbehaving nodes:

	\begin{definition}[Resilient synchronization]\label{def:resilientCon}
		The normal oscillators achieve resilient synchronization if the following conditions are satisfied even in the presence of misbehaving nodes:
		\begin{enumerate}
			\item\emph{Safety:} At any time $t$, the absolute frequency of any normal oscillator is in the range $\Omega$.  Moreover, the length of the containing arc is less than $0.5$. That is, for any $i\in\R$, it holds 
			\begin{equation}
				\omega_{i}(t) \in \Omega,\;\Delta(t)<0.5.
			\end{equation}
			\item\emph{Synchronization:} Both phases and frequencies of the normal oscillators asymptotically reach synchronized. That is, for any $i,j\in\R$, it holds
			\begin{equation}
				\begin{split}
					\lim_{t\to\infty} |\phi_{i}(t)-\phi_j(t)|=0,\;\lim_{t\to\infty} |\omega_{i}(t)-\omega_j(t)|=0.
				\end{split}
			\end{equation}
		\end{enumerate}
	\end{definition}

Clearly, normal oscillators reach resilient synchronization if and only if $\Delta(t)$ converges to zero.  
In Definition \ref{def:resilientCon}, the safety condition ensures that the frequency of each normal oscillator stays within the range $\Omega$ at any time, and the length of the containing arc is less than $0.5$, which guarantees that the phases of normal oscillators are not too spread out. The synchronization condition requires that both the phases and frequencies of normal nodes converge to the same value.

	\section{Resilient Synchronization of Pulse-Coupled Oscillators}\label{sec:alg}
	In this section, we will develop a secure algorithm to solve the problem of resilient synchronization among heterogeneous PCOs. We will also prove its efficiency in the presence of $f$-local attacks.
	
\subsection{Design of the resilient algorithm}
Our algorithm is inspired by the well-studied MSR protocol, which solves the real-valued resilient consensus problem (\hspace{1pt}\cite{dolev1986reaching,leblanc2013resilient,dibaji2017resilient}). However, as discussed in Section~\ref{sec:intro}, synchronization in the PCO network is different from the real-valued consensus problem in several aspects, which makes the direct application of the MSR algorithm impossible. Therefore, we introduce subtle modifications to the MSR protocol.

To be specific, each normal oscillator $i$ is assigned two auxiliary variables: a counter $c_i \in\{0,1,\cdots, d_i\}$ to keep track of the number of received pulses and a flag $\gamma_i\in \{0,1\}$ to indicate whether node $i$ has made an update in the current round. Moreover, to achieve frequency synchronization, a set $\mathcal{F}_i$ is introduced to collect the absolute frequencies received from neighbors. For each $i\in\R$, their initial values are set as $\gamma_i=0$, $c_i=0$, and $\mathcal{F}_i=\varnothing$. The whole procedure is presented in Algorithm~\ref{alg:synchronize}.

	\begin{algorithm}[h!] 
		1:\: The phase $\phi_{i}(t)$ evolves from $0$ to $1$.
		
		2:\: Once $\phi_{i}(t)$ reaches $1$, oscillator $i$ sets $\gamma_i=1$, emits a pulse, and resets its phase to $0$. Meanwhile, oscillator $i$ broadcasts its absolute frequency to out-neighbors.
		
		3:\: When oscillator $i$ receives a pulse, it adds the received frequency to $\mathcal{F}_i$ and updates $c_i=c_i+1$. If counter $c_{i}$ becomes $f+1$, it takes $\bar{z}_{i}(t)$ as
		$$
		\bar{z}_{i}(t)= \begin{cases}1-\phi_{i}(t) & \text { if } \phi_{i}(t) \in\left[ 0.5, 1\right) ,\\ 0 & \text { otherwise. }\end{cases}
		$$ Else if the counter $c_{i}$ becomes $d_{i}-f$, it sets $\underline{z}_{i}(t)$ as
		$$
		\underline{z}_{i}(t)= \begin{cases}-\phi_{i}(t) & \text { if } \phi_{i}(t) \in\left[0, 0.5\right), \\ 0 & \text { otherwise. }\end{cases}
		$$
		4:\: When $\phi_{i}(t)=0.5$ and $\gamma_i=1$, oscillator $i$ executes the following:
		
		\begin{enumerate}[leftmargin = 30 pt]
			\item[a):] If $c_{i}>d_{i}$, oscillator $i$ detects an attack and informs the system. It does not make any update and sets $\phi_{i}\left(t\right)=\phi_{i}(t^{-})$ and $\omega_{i}\left(t\right)=\omega_{i}(t^{-})$.
			
			\item[b):] If $c_{i} \leq d_{i}$, let us denote 
			\begin{equation}\label{eqn:f0}
				f_i = f-(d_{i} - c_{i}).
			\end{equation}
			Oscillator $i$ updates the phase $\phi_{i}(t)$ as
			\begin{align}\label{eqn:phaseupdate}
				\phi_{i}\left(t\right)&=\phi_{i}(t^{-})+ \frac{\bar{z}_{i}(t)+\underline{z}_{i}(t)}{2}.
			\end{align}
			Then, it sorts the values in $\mathcal{F}_i$ and removes $f_i$ largest and $f_i$ smallest values in $\mathcal{F}_i$. Denoting by $\mathcal{J}_i(t)$ the set of remaining states,
			oscillator $i$ updates its frequency as
			\begin{equation}\label{eqn:frequencyupdate}	
				\omega_i\left(t\right)= a_{ii}(t)\omega_i(t^{-})+\sum_{\omega_j(t^-)\in \mathcal{J}_i(t)} a_{ij}(t) \omega_j(t^-),			
			\end{equation}
			where each weight is lower bounded by $\alpha>0$ and $a_{ii}(t)+\sum_{\omega_j(t^-)\in \mathcal{J}_i(t)} a_{ij}(t)=1.$
		\end{enumerate}
		
		5:\: Oscillator $i$ resets $\gamma_i=0$, $c_i=0$, and $\mathcal{F}_i=\varnothing$.\\
		\caption{Resilient synchronization of heterogeneous pulse-coupled oscillators by absolute frequencies}
		\label{alg:synchronize}
	\end{algorithm}

In many existing works, an update on phase occurs immediately upon receiving a pulse. Therefore, each incoming pulse triggers an adjustment on the receiving oscillator's phase. However, according to \cite{wang2018pulse}, this mechanism is not resilient to network misbehaviors. In contrast, Algorithm~\ref{alg:synchronize} allows normal oscillators to update asynchronously and use delayed information. This means that the time instants of the node's receiving pulses and making updates are different. Furthermore, with the counter $c_i$, the proposed algorithm improves resilience in the sense that each node adjusts its phase only when a sufficient number of pulses are received. The proposed algorithm also uses an MSR-based protocol \eqref{eqn:frequencyupdate} to renew the node's frequency. We will show that this algorithm can resiliently facilitate synchronization in the network of heterogeneous PCOs even with the coupling between phases and frequencies.

\begin{remark}
In Algorithm~\ref{alg:synchronize}, the oscillators are required to broadcast their absolute frequencies $\{\omega_i(t)\}$ to neighbors. This requires packet-based communications to transmit real-valued messages. Moreover, accessing the absolute frequency for each oscillator can also be challenging in practice, as each oscillator operates with its own clock, making only the relative frequency available. To address these challenges, we will demonstrate in Section~\ref{sec:extend} how resilient synchronization can be achieved with relative frequencies by simply using the pulse-based communication. 
\end{remark}

\subsection{Performance analysis}
This subsection provides a performance analysis of Algorithm~\ref{alg:synchronize}. It is important to note that the pulse in this network does not contain any information on its source or destination, which means that normal oscillators may not be able to identify irregular behavior in their neighbors. If a malicious node keeps continuously emitting pulses, it can effectively prevent normal nodes from reaching synchronization. However, Algorithm~\ref{alg:synchronize} includes a mechanism to easily detect such misbehavior.
To be specific, each normal node $i$ in the network monitors the number of pulses it receives within a given period of time. If a normal node receives a number of pulses greater than $d_i$ within a round, it will conclude that there are malicious nodes present and inform the system. This mechanism allows our algorithm to detect the presence of malicious nodes with high accuracy.

However, the algorithm is not able to detect more sophisticated  stealthy attacks\footnote{The stealthy attack in this paper is defined in a similar  way to that in \cite{iori2020resilient}.}  where malicious nodes emit pulses intermittently or with low frequency. In the rest of this section, we will show that the proposed algorithm can also efficiently work against these stealthy attacks and facilitate resilient synchronization among normal nodes. To demonstrate this, we first make the following assumption:
\begin{assumption}\label{ass:network}
 The network $\mathcal{G}=(\mathcal{V},\mathcal{E})$ is $(2f+1)$-robust.
\end{assumption}

Assumption~\ref{ass:network} has been proven to be sharp in achieving real-valued resilient consensus (\hspace{1pt}\cite{leblanc2013resilient}) and is therefore also applied here to the problem of resilient consensus to the heterogeneous PCO network. It is easy to verify that, under Assumption~\ref{ass:network}, each normal node has at least $2f+1$ in-neighbors, i.e., $d_i\geq 2f+1$.

Although the oscillators evolve in continuous time, their behaviors are completely dominated by the event times of emitting the pulses and making the updates. Therefore, we view the entire PCO network as a discrete-time system. To be specific, let $\mathbb{T}$ be the set of all event times when the normal or malicious nodes fire, or when the normal nodes update. If multiple events happen at the same time, they would be counted independently. Let us arrange the elements in $\mathbb{T}$ in an increasing order and denote the event time sequence by $\left\{t_{k} \right\}$, where $0=t_{0}\leq t_{1}\leq \cdots\leq t_{k}\leq \cdots$. For simplicity, we define 
\begin{equation}
\begin{split}
\phi_i(k)\triangleq \phi_i(t_k), \; \omega_i(k)\triangleq\omega_i(t_k).
\end{split}
\end{equation}
Let us denote the minimum and maximum absolute frequencies of normal oscillators respectively as
	\begin{equation}
		\begin{split}
			&\underline{\omega}(k) \triangleq \min_{i\in\R} \omega_i(k),\; \bar{\omega}(k) \triangleq \max_{i\in\R} \omega_i(k).
		\end{split}
	\end{equation}
In the presence of stealthy attacks, there exists a finite $\bar{k}$ such that each normal node makes at least one update within $\bar{k}$ events.
Note that each normal or stealthy attacking oscillator can contribute at most two to the set $\mathbb{T}$ per round. Hence, we can conclude 
\begin{equation}\label{eqn:kbar}
\bar{k}\leq 2N.
\end{equation}
We further define
\begin{equation}\label{eqn:frequency}
	\begin{split}
		&m_\omega(k) \triangleq \min\{\underline{\omega}(k-\bar{k}+1),\underline{\omega}(k-\bar{k}+2),\cdots,\underline{\omega}(k)\}, \\
		&M_\omega(k) \triangleq \max\{\bar{\omega}(k-\bar{k}+1),\bar{\omega}(k-\bar{k}+2),\cdots,\bar{\omega}(k)\}, 
	\end{split}
\end{equation}
and
\begin{equation}
	\begin{split}
		\delta(k) &\triangleq M_\omega(k)-m_\omega(k).
	\end{split}
\end{equation}
	
With these preparations, we are now ready to demonstrate the convergence of Algorithm~\ref{alg:synchronize}.
Given the network of heterogeneous oscillators, we will show the synchronization of frequencies and phases separately. It is worth noting that the coupling between frequencies and phases further complicates the analysis.
To tackle this issue, we first introduce the following lemma:
	\begin{lemma}\label{lmm:frequency}
	Consider any event instant $k\geq 0$. Suppose that Assumptions~\ref{assum:attackmodel} and \ref{ass:network} hold, and $\Delta(t)<0.5$ for any $t\in\mathbb{Z}_{\geq 0}$ and $t\leq k$. Under stealthy attacks, it follows
\begin{equation}\label{eqn:M}
m_\omega(k) \geq m_\omega(k-1),\;M_\omega(k) \leq M_\omega(k-1),
\end{equation}
and
\begin{equation}\label{eqn:epsilon}
\delta(k)\leq \bigg(1-\frac{\alpha^{\bar{k}R}}{2}\bigg)^{\lfloor k/(\bar{k}R)\rfloor}\delta(0).
\end{equation}
	\end{lemma}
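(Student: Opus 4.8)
The plan is to treat the frequency update \eqref{eqn:frequencyupdate} as a standard MSR-type convex combination and to argue that, over each window of $\bar{k}$ events, every normal oscillator performs at least one such update using a value that is "pulled" toward the interior of the current normal-frequency interval. The first step is to establish the monotonicity claim \eqref{eqn:M}. Since $\Delta(t)<0.5$ is assumed up to time $k$, the safety reasoning guarantees that a stealthy attacker, having at most $f$ nodes in each normal neighborhood and being unable to deliver more than $d_i$ pulses per round without detection, cannot inject a "spurious extreme" that survives the trimming. Concretely, when oscillator $i$ updates, it discards the $f_i = f-(d_i-c_i)$ largest and $f_i$ smallest received frequencies; since at most $f$ of the $c_i$ received values come from malicious nodes and $f_i \ge f - (d_i - c_i) \ge$ (number of missing normal pulses subtracted appropriately), each remaining value in $\mathcal{J}_i(t)$ is either a genuine normal frequency or lies between two normal frequencies. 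Hence $\omega_i(t)$, being a convex combination (weights summing to one, each $\ge \alpha$, including the self-weight $a_{ii}$) of values in $[\underline{\omega},\bar\omega]$ of the relevant recent window, stays in that interval. Because $m_\omega(k)$ and $M_\omega(k)$ are running min/max over a sliding window of length $\bar k$, and no normal frequency ever leaves $[m_\omega(k-1),M_\omega(k-1)]$, the running extremes are monotone, giving \eqref{eqn:M}.

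The second and main step is the contraction estimate \eqref{eqn:epsilon}. Here I would adapt the argument of \cite{kikuya2017fault}: fix a window of $\bar k R$ events and track the quantity $\delta$. Within each block of $\bar k$ events, every one of the $R$ normal oscillators updates at least once. Consider the normal oscillator whose pre-update frequency is (close to) the current maximum $M_\omega$. By $(2f+1)$-robustness of $\mathcal{G}$ — applied to the pair consisting of the "high" set and the "low" set of normal nodes, as in the classical MSR convergence proof — at least one normal oscillator has $2f+1$ in-neighbors outside its own set, so after trimming $2f_i \le 2f$ extreme values it retains at least one neighbor value that is at most $M_\omega - \text{(a definite gap)}$, or symmetrically at least $m_\omega + \text{gap}$. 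Combined with the self-weight and the uniform lower bound $\alpha$ on all weights, this oscillator's new frequency satisfies $\omega_i \le M_\omega - \alpha\,(M_\omega - m_\omega)$ (or the mirrored bound). Propagating this one-step strict decrease of the extreme through the $R$ normal nodes over $\bar k$ events, and then over $\bar k R$ events so that the "improved" value has time to spread to every normal node, yields $\delta(\bar k R) \le (1 - \alpha^{\bar k R}/2)\,\delta(0)$; iterating over $\lfloor k/(\bar k R)\rfloor$ such blocks gives \eqref{eqn:epsilon}. The factor $\alpha^{\bar k R}$ arises because a favorable value may need to be relayed through up to $\bar k R$ convex combinations, each contributing a factor $\ge \alpha$, before it influences the farthest normal node; the $1/2$ comes from the averaging of the two trimmed "directions."

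The step I expect to be the real obstacle is the careful bookkeeping of \emph{asynchrony and delayed information} in the second step. Unlike the synchronous MSR setting of \cite{kikuya2017fault}, here different normal nodes update at different event instants, and the frequencies a node uses are the (possibly stale) values $\omega_j(t^-)$ broadcast at the neighbors' most recent firings, not their current values. I would handle this by working with the sliding-window extremes $m_\omega(k), M_\omega(k)$ precisely so that any value a normal node could possibly have received within the last round is sandwiched in $[m_\omega(k), M_\omega(k)]$; the monotonicity from step one then ensures the window quantities themselves do not expand, so the block-wise contraction argument still closes. A secondary subtlety is verifying that under a stealthy attack the counter condition $c_i \le d_i$ holds (so case 4b, not 4a, applies) and that the definition \eqref{eqn:f0} of $f_i$ indeed makes $\mathcal{J}_i(t)$ contain only "safe" values — this is where Assumption~\ref{ass:network} ($d_i \ge 2f+1$) and the $f$-locality of the attack are both used, and it must be checked that $f_i \ge 0$ and that trimming $2f_i$ values still leaves at least one normal neighbor, which again follows from $(2f+1)$-robustness.
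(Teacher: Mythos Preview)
Your proposal is essentially the paper's own argument: monotonicity of $m_\omega,M_\omega$ via the MSR convex-combination property, then a set-shrinking argument on ``high'' and ``low'' normal-frequency sets driven by $(2f+1)$-robustness, iterated over blocks of $\bar kR$ events. Your handling of asynchrony through the sliding-window extremes and your check that $f_i\in[0,f]$ under stealthy attacks are exactly what the paper does.

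One quantitative slip to fix: the one-step bound you write, $\omega_i\le M_\omega-\alpha(M_\omega-m_\omega)$, is too strong. A neighbor outside the ``high'' set only guarantees a value $\le M_\omega-\epsilon$, where $\epsilon$ is the threshold defining that set; the paper takes $\epsilon(t)=\delta(t)/2$ initially (this is the source of the $1/2$, chosen so the high and low sets are disjoint, not an ``averaging of directions''). The update then gives $\omega_i\le M_\omega-\alpha\epsilon(\bar t)$, and to ensure that nodes which have left the high set remain out, the threshold must shrink as $\epsilon(\bar t+1)=\alpha\,\epsilon(\bar t)$. After $\bar kR$ events one set empties, so the surviving gap is $\epsilon(t+\bar kR)=\alpha^{\bar kR}\delta(t)/2$, which is precisely where the factor $\alpha^{\bar kR}/2$ comes from---not from relaying a fixed favorable value through $\bar kR$ hops.
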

	\begin{proof}
Let us consider the update of each normal node $i\in\R$. To make our discussion more precise, we define the term \textit{round} for node $i$, denoting the period between two consecutive resetting times of counter $c_i$. In other words, a round is the time period from the moment when node $i$ resets its counter $c_i$ and starts counting the pulses until the moment when its counter is reset again.

Since $\Delta(t)<0.5$ for any $t\in [0,k]$, it is easy to verify that node $i$ receives one and only one pulse from each of its normal in-neighbors within every round. Therefore, $c_i\geq d_i-f$. On the other hand, since the attack is stealthy, namely, it is not detected by any normal node running Algorithm~\ref{alg:synchronize}, we conclude that $c_i\leq d_i$. In view of \eqref{eqn:f0}, one has $f_i\in[0,f]$. By virtue of Lemma~\ref{lmm:r'}, the network is $(2f_i+1)$-robust. Moreover, within this round, there exist at most $f_i$ malicious values in $\mathcal{F}_i$. 

According to Algorithm~\ref{alg:synchronize}, node $i$ removes
$2f_i$ values from $\mathcal{F}_i$, where $f_i$ from above and $f_i$ from below. Notice that by doing so, those
faulty values outside the interval $[m_\omega(t),M_\omega(t)]$ are all ignored in the update \eqref{eqn:frequencyupdate}, namely, $\omega_j(t^-)\in[m_\omega(t),M_\omega(t)],\;\forall \omega_j(t^-)\in \mathcal{J}_i(t)$. In other words, the absolute frequency of node $i$ is affected by only the values within $[m_\omega(t),M_\omega(t)]$. By \eqref{eqn:frequencyupdate}, the new frequency $\omega_i\left(t\right)$ is a convex combination of values in this interval. Since the frequency of any normal node keeps unchanged until next event, one concludes that \eqref{eqn:M} holds.

Next, for any $\epsilon\in\mathbb{R}$ and $\bar{t}\geq t$, let us define
\begin{equation}\label{eqn:V^MandV^m}
	\begin{split}
		\mathcal{R}_\omega^m(\bar{t},t,\epsilon)\triangleq\{i\in\mathcal{R}: \omega_{i}(\bar{t})<m_\omega(t)+\epsilon\},\\
		\mathcal{R}_\omega^M(\bar{t},t,\epsilon)\triangleq\{i\in\mathcal{R}: \omega_{i}(\bar{t})>M_\omega(t)-\epsilon\}.
	\end{split}
\end{equation}
That is, $\mathcal{R}_\omega^m(\bar{t},t,\epsilon)$ [resp. $\mathcal{R}_\omega^M(\bar{t},t,\epsilon)$] includes all normal nodes with absolute frequency lower [resp. greater] than $m_\omega(t)+\epsilon$ [resp. $M_\omega(t)-\epsilon$] at time $\bar{t}$. 

Suppose that at time $t$, it holds that $\underline{\omega}(k) \neq \bar{\omega}(k)$. Therefore, $\delta(t)>0$. Let us define $\epsilon(t) = \delta(t)/2$. It is easy to verify that $\mathcal{R}_\omega^M(t,t,\epsilon(t))$ and $\mathcal{R}_\omega^m(t,t,\epsilon(t))$ are disjoint and both nonempty. Furthermore, we recursively define $\epsilon(t+1)=\alpha\epsilon(t)<\epsilon(t)$. 
We will prove that, the cardinality of $\mathcal{R}_\omega^M(\bar{t},t,\epsilon(\bar{t}))$ is a nonincreasing function of time $\bar{t}=t,t+1,\cdots,t+\bar{k}$. To see this, it is sufficient to show that for any normal node $j$, if it is not in the set $\mathcal{R}_\omega^M(\bar{t},t,\epsilon(\bar{t}))$, then it is also not in the set $\mathcal{R}_\omega^M(\bar{t}+1,t,\epsilon(\bar{t}+1))$. This clearly holds if node $j$ does not update its frequency at time $\bar{t}$. On the other hand, if node $j$ makes an update at $\bar{t}$, it follows that
\begin{equation}\label{eqn:upperbound}
	\begin{split}
		\omega_j\left(\bar{t}+1\right)&= a_{jj}(\bar{t})\omega_j(\bar{t})+\sum_{\omega_l(\bar{t})\in \mathcal{J}_j(\bar{t})} a_{jl}(\bar{t}) \omega_l(\bar{t})\\
		&\leq \alpha[M_\omega(t)-\epsilon(\bar{t})]+ (1-\alpha) M_\omega(t) \\&= M_\omega(t)-\alpha\epsilon(\bar{t})\\&=M_\omega(t)-\epsilon(\bar{t}+1),
	\end{split}
\end{equation}
where the inequality holds by \eqref{eqn:M} and we place the largest possible weight on $M_\omega(t)$. Therefore, it holds $j\notin\mathcal{R}_\omega^M(\bar{t},t,\epsilon(\bar{t}))$. Similarly, we can check that the cardinality of $\mathcal{R}_\omega^m(\bar{t},t,\epsilon(\bar{t}))$ is also nonincreasing in $\bar{t}=t,t+1,\cdots,t+\bar{k}$.

Next, we show that at least one of the sets $\mathcal{R}_\omega^M(\bar{t},t,\epsilon(\bar{t}))$ and $\mathcal{R}_\omega^m(\bar{t},t,\epsilon(\bar{t}))$ will become empty after finite events. To see this, recall that the network is $(2f+1)$-robust. Therefore, there exists a normal node, labeled as $i$, in either $\mathcal{R}_\omega^M(t,t,\epsilon(t))$ or $\mathcal{R}_\omega^m(t,t,\epsilon(t))$ such that it has at least $2f+1$ in-neighbors from outside the set it belongs to.

Without loss of generality, let $i\in \mathcal{R}_\omega^M(t,t,\epsilon(t))$. It has no less than $2f+1$ in-neighbors in $\mathcal{V}\backslash\mathcal{R}_\omega^M(t,t,\epsilon(t))$. Under Assumption~\ref{assum:attackmodel}, at least $f+1$ of these in-neighbors must be normal. As discussed above, these nodes will remain outside of $\mathcal{R}_\omega^M(\bar{t},t,\epsilon(\bar{t}))$ for $\bar{t}=t,\cdots,t+\bar{k}$. On the other hand, notice that node $i$ makes at least one update in $\bar{t}=t,\cdots,t+\bar{k}$, when it ignores at most $f_i\leq f$ values outside of $\mathcal{R}_\omega^M(\bar{t},t,\epsilon(\bar{t}))$. Therefore, at least one element in $\mathcal{J}_i(\bar{t})$ is upper bounded by $M_\omega(t)-\epsilon(\bar{t})$. Consequently, it follows that
\begin{equation}
	\begin{split}
		\omega_i\left(\bar{t}+1\right)&= a_{ii}(\bar{t})\omega_i(\bar{t})+\sum_{\omega_j(\bar{t})\in \mathcal{J}_i(\bar{t})} a_{ij}(\bar{t}) \omega_j(\bar{t})\\
		&\leq (1-\alpha) M_\omega(t) + \alpha[M_\omega(t)-\epsilon(\bar{t})]\\&=M_\omega(t)-\epsilon(\bar{t}+1).
	\end{split}
\end{equation}
Therefore, after $\bar{k}$ steps, node $i$ will be outside of $\mathcal{R}_\omega^M(t+\bar{k},t,\epsilon(t+\bar{k}))$. Similarly, if $i\in \mathcal{R}_\omega^m(t,t,\epsilon(t))$, we can obtain an analogous result that node $i$ will be outside of  $\mathcal{R}_\omega^m(t+\bar{k},t,\epsilon(t+\bar{k}))$. We therefore conclude that $|\mathcal{R}_\omega^M(t+\bar{k},t,\epsilon(t+\bar{k}))|+|\mathcal{R}_\omega^m(t+\bar{k},t,\epsilon(t+\bar{k}))|<|\mathcal{R}_\omega^M(t,t,\epsilon(t))|+|\mathcal{R}_\omega^m(t,t,\epsilon(t))|.$

As long as both $\mathcal{R}_\omega^M(t+\bar{k},t,\epsilon(t+\bar{k}))$ and $\mathcal{R}_\omega^m(t+\bar{k},t,\epsilon(t+\bar{k}))$ are nonempty, we can repeat the above analysis and conclude that the cardinality of at least one of these two sets will decrease by $1$ after $\bar{k}$ events. Moreover, since both $|\mathcal{R}_\omega^M(t,t,\epsilon(t))|$ and $|\mathcal{R}_\omega^m(t,t,\epsilon(t))|$ are upper bounded by $R$, one of these sets would be empty after $\bar{k}R$ steps. Namely, one of following statements must be true:
\begin{enumerate}
	\item $\mathcal{R}_\omega^M(t+\bar{k}R,t,\epsilon(t+\bar{k}R))=\varnothing,$
	\item $\mathcal{R}_\omega^m(t+\bar{k}R,t,\epsilon(t+\bar{k}R))=\varnothing.$
\end{enumerate}
Without loss of generality, we assume that the first statement holds. By \eqref{eqn:V^MandV^m}, at time $t+\bar{k}R$, the absolute frequencies of all normal nodes are upper bounded by $M_\omega(t)-\epsilon(t+\bar{k}R)$, namely,
$M_\omega(t+\bar{k}R)\leq M_\omega(t)-\epsilon(t+\bar{k}R).$ On the other hand, from \eqref{eqn:M}, we have
$m_\omega(t+\bar{k}R)\geq m_\omega(t).$
Therefore, one concludes that                                                                                           
\begin{equation}
	\begin{split}
		\delta(t+\bar{k}R)&\leq \delta(t)-\epsilon(t+\bar{k}R)=
		\bigg(1-\frac{\alpha^{\bar{k}R}}{2}\bigg)\delta(t).
	\end{split}
\end{equation}
On the other hand, in view of \eqref{eqn:M}, we obtain for any $t$ that
\begin{equation}
	\delta(t+1)\leq \delta(t).
\end{equation}Then it is not difficult to verify that \eqref{eqn:epsilon} holds.
\end{proof}

A requirement for holding Lemma~\ref{lmm:frequency} is that the length of the containing arc is always within $0.5$, i.e., $\Delta(t)<0.5$. This raises a natural question: under what condition can we enforce this requirement? Specifically, in the heterogeneous PCO network, differences in frequencies and phases can affect each other. Therefore, in the following discussion, we will investigate how this coupling influences $\Delta(t)$.

To simplify the analysis, let us introduce a ``virtual" node $s$. The virtual node evolves freely from $0$ to $1$ without any jumps in its phase. At each instant $k$, we respectively denote the phase and absolute frequency of $s$ by $\phi_s(k)$ and $\omega_s(k)$, where $\phi_s(0)=0$. On the other hand, $\omega_s(k)$ is equal to the smallest absolute frequency among normal nodes at time $k$, namely,
$\omega_s(k) = m_\omega(k)$. It is important to note that this virtual node is used solely for analysis purposes and need not be known by any nodes in the network.

Similar to Definition~\ref{def:resilientCon}, we define a \textit{virtual containing arc} as below:
\begin{definition}[Virtual containing arc]\label{def:virtualarc}
	The virtual containing arc is the shortest arc that contains the phases of all normal
	oscillators and the virtual node.
\end{definition}
We define the two ends of the arc, following the clockwise direction, as the head and the tail, respectively. Then, we have the following result:

\begin{lemma}\label{lmm:alphabeta}
	Consider any event instant $k\geq 0$. Suppose that Assumptions~\ref{assum:attackmodel} and \ref{ass:network} hold, and $\Delta(t)<0.5$ holds for any $t\in\mathbb{Z}_{\geq 0}$ and $t\leq k$. If the normal nodes perform Algorithm~\ref{alg:synchronize} under stealthy attacks, then the virtual node $s$ is the tail of the virtual containing arc at time $k$.
\end{lemma}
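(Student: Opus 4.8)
The plan is to proceed by induction on the event index $k$, working in \emph{lifted} coordinates $\Phi_i(t)$ --- the phase of node $i$ augmented by the number of times it has fired so far --- so that the reset of a firing node becomes a \emph{continuous} transition, and only the continuous flow and the Step~4 phase update can change the configuration modulo $1$. For the base case $k=0$, condition \eqref{eqn:initialphase} gives $\min_{i\in\R}\phi_i(0)=0$ and $\max_{i\in\R}\phi_i(0)<0.5$ while $\phi_s(0)=0$, so the phases of all normal nodes together with that of $s$ lie in $[0,0.5)$, the widest cyclic gap is the one ending at $\phi_s=0$, and $s$ is the tail of the virtual containing arc. For the inductive step I assume the statement at the instants $0,\dots,k-1$ --- hence, in lifted coordinates, every normal node is no less advanced than $s$ and the virtual containing arc is short enough for the notion of tail to be unambiguous --- and I follow the evolution from $t_{k-1}$ to $t_k$, decomposed into a continuous flow on $(t_{k-1},t_k)$ followed by the single event at $t_k$.

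The observation that drives everything is that $s$ is the slowest node: $\omega_s(k)=m_\omega(k)\le\underline{\omega}(k)\le\omega_i(k)$ for every $i\in\R$. Along the flow all lifted phases are nondecreasing and $\Phi_s$ increases the slowest, so the lead $\Phi_i-\Phi_s$ of each normal node over $s$ is nondecreasing and $s$ stays at the back of the pack. The only quantity that needs monitoring is the length of the virtual containing arc, which (as long as $s$ is the tail) equals $\max_{i\in\R}(\Phi_i-\Phi_s)$; I would keep it below the level at which an internal gap could overtake the gap terminating at $s$, combining $\Delta(t)<0.5$ (which bounds the spread of the normal nodes) with Lemma~\ref{lmm:frequency} (geometric decay of $\delta$, hence a bounded cumulative lag of $s$ behind the pack). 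With this bound in hand, $s$ remains the tail throughout $(t_{k-1},t_k)$.

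It remains to treat the event at $t_k$, which is of one of three kinds. (i) If a misbehaving node fires, no normal phase and $\phi_s$ change --- in Algorithm~\ref{alg:synchronize} a normal node adjusts its phase only when its own phase equals $0.5$, never upon receipt of a pulse --- so the lifted configuration, and hence the tail, is unchanged. (ii) If a normal node $j$ fires, the reset of $\phi_j$ from $1$ to $0$ is continuous in lifted coordinates, so again nothing changes modulo $1$ and $s$ remains the tail (equivalently: $\Delta(t_k)<0.5$ together with $s$ lying behind the pack forces $\phi_s>0.5$ at the firing instant, so $j$ reappears immediately clockwise-ahead of $s$ and the gap running back from $j$ to $s$ stays the largest). (iii) If a normal node $i$ performs the Step~4 update, its phase moves from $0.5$ to $0.5+(\bar{z}_i(t)+\underline{z}_i(t))/2$ with $\bar{z}_i(t)\ge 0\ge\underline{z}_i(t)$; I must show that this bounded shift neither pushes $i$ behind $s$ nor inflates the virtual arc past the bound used above. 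The key point is that $\bar{z}_i$ was recorded when the counter $c_i$ reached $f+1$ and $\underline{z}_i$ when it reached $d_i-f$, instants at which the inductive hypothesis already places $s$ at the tail; since $\bar{z}_i$ and $\underline{z}_i$ are determined by the phase of $i$ at those instants, they can be related to the phase of $s$ there, which in turn controls $\Phi_i$ relative to $\Phi_s$ after the update.

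I expect case (iii) to be the principal obstacle: the Step~4 update is the only event that can move a normal phase backward, so ruling out that it overtakes $s$ (or over-inflates the virtual containing arc) requires a careful coupling of the recorded quantities $\bar{z}_i,\underline{z}_i$ with the position of $s$ at the threshold-crossing instants --- exactly where the inductive hypothesis and the standing assumption $\Delta(t)<0.5$ must be used non-trivially. The secondary technical point is the flow estimate that keeps the virtual containing arc short enough for the tail to stay well defined; both rest on $s$ being the slowest node and on the decay of $\delta$ furnished by Lemma~\ref{lmm:frequency}.
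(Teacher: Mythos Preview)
Your inductive scheme in lifted coordinates, with the split into flow plus the three event types, is exactly the paper's approach; both arguments dispatch the flow and the firing cases immediately from $\omega_s=m_\omega\le\omega_i$ and isolate the Step~4 update as the only nontrivial event.

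The gap is in case~(iii). You observe that $\bar z_i,\underline z_i$ are recorded at instants $k_1,k_2$ where the inductive hypothesis already places $s$ at the tail, but you never say \emph{how} that controls the backward shift. The paper's decisive step is to name the emitter $\beta$ of the $(d_i-f)$-th pulse (taking $\beta=i$ when $\underline z_i=0$) and use a counting argument: under the $f$-local model the $(d_i-f)$-th pulse arrives no later than the pulse of the slowest \emph{normal} in-neighbour of $i$, so $\beta$'s firing position lies on the normal containing arc at time $k_2$ and in particular is clockwise-ahead of $s$. This gives
\[
|\underline z_i|=\dist(\phi_i(k_2),\phi_\beta(k_2))\le\dist(\phi_i(k_2),\phi_s(k_2))\le\dist(\phi_i(k^-),\phi_s(k^-)),
\]
the last inequality because $\omega_s\le\omega_i$ on $[k_2,k)$. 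Since $\bar z_i\ge 0$, the backward component of the update is at most $|\underline z_i|/2$, i.e.\ at most half of $i$'s current lead over $s$, so $i$ cannot cross $s$. Your phrase ``they can be related to the phase of $s$ there'' does not capture this comparison with a \emph{normal} neighbour---which is precisely where the $f$-local bound and $d_i\ge 2f+1$ enter.

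Your separate concern about keeping the virtual containing arc short enough for the tail to be unambiguous is legitimate but is not how the paper organises the argument: inside Lemma~\ref{lmm:alphabeta} the paper proves only that $s$ is behind every normal node, and the length bound on the virtual arc is deferred to Lemma~\ref{lmm:Delta}. You may either carry the extra estimate as you propose (via Lemma~\ref{lmm:frequency}) or follow the paper and treat Lemmas~\ref{lmm:alphabeta} and~\ref{lmm:Delta} as a joint induction; in either case, the bound on the forward shift via $\bar z_i$ and the node $\alpha$ belongs to Lemma~\ref{lmm:Delta}, not here.
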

\begin{proof}
	In order to prove this lemma, it is sufficient to show that the virtual node $s$ is always behind all normal nodes in the clockwise direction. We will prove this by induction. It is clear that the statement holds at $k=0$. We assume that it also holds for any $t<k$. If no normal nodes update at the $k$-th event, which implies that all normal nodes evolve from $k-1$ to $k$ without adjustment on phases and frequencies, then the statement is easily verified since node $s$ has less or equal absolute frequency than all normal oscillators. On the other hand, let us consider the case where a normal node, labeled as $i$, makes an update at time $k$. Since $\Delta(t)<0.5$ holds for all $t< k$, node $i$ receives one and only one pulse from each of its normal in-neighbors within every round. Given the $f$-local attack model, we conclude that the number of pulses it receives within each round is more than $d_i-f$. Here, we introduce two indices $\alpha,\beta\in\mathcal{R}$ for node $i$. Specifically, suppose that node $i$ receives the $(f+1)$-th pulse at time $k_1$. If it holds that $\bar{z}_{i}(k_1)= 1-\phi_{i}(k_1)$, let node $\alpha$ be the one that emits the $(f+1)$-th pulse. Otherwise, let $\alpha$ be node $i$ itself. Similarly,
	suppose that the $(d_i-f)$-th pulse is transmitted at time $k_2$. Let node $\beta$ be the one that emits this pulse if $\underline{z}_{i}(k_2)= -\phi_{i}(k_2)$, and be node $i$ otherwise. Assumption~\ref{ass:network} implies that $d_i\geq 2f+1$. Therefore, both $\alpha$ and $\beta$ exist and $k_1\leq k_2<k$.
	
	It is worth noticing that under the $f$-total attack model, at time $k_2$, among
	all the normal in-neighbors of node $i$, the one taking the least relative phase with $s$ will be the $(d_i-f)$-th one or later to emit a pulse. In either case, one can verify that $\beta$ is on the normal containing arc at time $k_2$. Consequently, 
	when making the update, the phase of node $i$ decreases by at most 
	\begin{equation}
	\begin{split}
	\frac{\underline{z}_{i}(k)}{2}&=\frac{1}{2}\dist(\phi_{i}(k_2),\phi_{\beta}(k_2))\\&\leq \frac{1}{2}\dist(\phi_{i}(k_2),\phi_{s}(k_2))\\&\leq \frac{1}{2}\dist(\phi_{i}(k^-),\phi_{s}(k^-)),
	\end{split}
	\end{equation} 
	where the first inequality holds as $s$ in the tail of of the virtual containing arc at time $k_2$ by assumption, and the last inequality holds as $\omega_s(t)\leq \omega_i(t)$ for all $t$. Therefore, node $i$ remains ahead of node $s$ even after the phase update. The other normal nodes evolve freely from $k-1$ to $k$ and thus they also continue to be ahead of node $s$. This completes the proof.
		
\end{proof}

For each $i\in\R$, we define its \textit{relative phase} with respect to $s$ as
\begin{equation}
	r_i(t) \triangleq \dist(\phi_i(t),\phi_s(t)),
\end{equation}
where given any two phases $\phi_i, \phi_j \in[0,1]$, $\dist\left(\phi_i, \phi_j\right)$ is given by
$$
\dist\left(\phi_i, \phi_j\right)=\begin{cases}
\phi_i - \phi_j, \text{if } \phi_i\geq \phi_j,\\
1- \phi_j+\phi_i, \text{ otherwise}.
\end{cases}
$$
Combining the above definition with Lemma~\ref{lmm:alphabeta}, it is easy to see that $r_i(t)$ is the distance between $\phi_i(t)$ and $\phi_s(t)$ in the clockwise direction.
Similar to \eqref{eqn:frequency}, we define
\begin{equation}\label{eqn:Mr}
	\begin{split}
		&\underline{r}(k) \triangleq \min_{i\in\R} r_i(k),\; \bar{r}(k) \triangleq \max_{i\in\R} r_i(k),\\
		&m_r(k) \triangleq \min\{\underline{r}(k-\bar{k}+1),\underline{r}(k-\bar{k}+2),\cdots,\underline{r}(k)\}, \\
		&M_r(k) \triangleq \max\{\bar{r}(k-\bar{k}+1),\bar{r}(k-\bar{k}+2),\cdots,\bar{r}(k)\},  
	\end{split}
\end{equation}
where $r_i(k-\ell)=r_i(0)$ if $k<\ell$. Then we denote
\begin{equation}\label{eqn:Pk}
	V(k) \triangleq M_r(k)-m_r(k).
\end{equation}
It is easy to see that
\begin{equation}\label{eqn:P0}
V(0) =\Delta(0).
\end{equation}

Next, let us introduce the following lemma:

	\begin{lemma}\label{lmm:Delta}
		Suppose that Assumptions~\ref{assum:attackmodel} and \ref{ass:network} hold and the initial states of normal oscillators satisfy 
		\begin{equation}\label{eqn:initial}
			\Delta(0)+\frac{4NR}{\alpha^{2NR}}\delta(0)<0.5.
		\end{equation}
		If the normal nodes perform Algorithm~\ref{alg:synchronize} under stealthy attacks, then it holds at any instant $k$ that
		\begin{equation}\label{eqn:Delta}
			\Delta(k)<0.5.
		\end{equation}
	\end{lemma}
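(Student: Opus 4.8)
The plan is to establish, by induction on the event index $k$, the slightly stronger statement that $\Delta(t)<0.5$ holds for \emph{all} times $t$ up to $t_k$, not merely at event instants. The base case $k=0$ is immediate from \eqref{eqn:initial}, whose second summand is nonnegative. For the inductive step I would assume $\Delta(t)<0.5$ for all $t\le t_{k-1}$; then Lemmas~\ref{lmm:frequency} and \ref{lmm:alphabeta} apply on this interval, so $\delta(\cdot)$ decays geometrically as in \eqref{eqn:epsilon} and the virtual node $s$ remains the tail of the virtual containing arc. In this regime the shortest arc containing the normal phases is exactly the sub-arc of the virtual arc running from the rearmost to the foremost normal node, hence $\Delta(t)\le \bar r(t)-\underline r(t)\le \bar r(t)$, where $\bar r(t)=\max_{i\in\R}r_i(t)$ is the length of the virtual containing arc. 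It therefore suffices to show $\bar r(t)<0.5$ for all $t\le t_k$; note $\bar r(0)=\Delta(0)$ because $\underline r(0)=0$ (by \eqref{eqn:initialphase} together with $\phi_s(0)=0$), consistent with \eqref{eqn:P0}.

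The next step is to control how $\bar r$ can grow, distinguishing three situations. Between two consecutive event times, every normal node $i$ moves clockwise at rate $\omega_i$ and the virtual node at rate $\omega_s=m_\omega$, so each $r_i$ is nondecreasing with $\dot r_i=\omega_i-m_\omega\le\bar\omega(k)-m_\omega(k)\le\delta(k)$; thus $\bar r$ grows at rate at most $\delta(k)$ over the interval, whose length inside a single round is at most $1/\underline\omega\le 1$. At a firing event the firing node merely wraps around the circle, so all $r_i$ — and hence $\bar r$ — stay continuous. At an update of node $i$ the phase jumps by $(\bar z_i+\underline z_i)/2$: the downward part $\underline z_i/2$ is precisely the displacement analyzed in the proof of Lemma~\ref{lmm:alphabeta}, moving $\phi_i$ toward the reference neighbor $\beta$ on the normal containing arc at time $k_2$, so $r_i$ cannot become negative; for the upward part $\bar z_i/2$ I would argue symmetrically using the reference neighbor $\alpha$ emitting the $(f+1)$-th pulse, which, when $\bar z_i\neq0$, has just fired and lies on the normal containing arc at $k_1$, with $\bar z_i=\dist(\phi_\alpha(k_1),\phi_i(k_1))$. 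Since $\alpha$ is normal we always have $r_\alpha\le\bar r$, so after the update $r_i$ can exceed $\bar r$ by at most the drift of $r_\alpha$ accumulated between $k_1$ and the update time, which is at most $\delta(k)$ because $k_1$ lies within the current round. Thus each update enlarges $\bar r$ by at most $\delta(k)$.

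It then remains to sum these contributions. Group the events into consecutive blocks of $\bar kR$ events; by \eqref{eqn:kbar} we have $\bar kR\le 2NR$, and by Lemma~\ref{lmm:frequency} every value of $\delta(\cdot)$ occurring in the $j$-th block is at most $(1-\alpha^{\bar kR}/2)^{j}\delta(0)$. Within each block the drift accumulated between events plus the overshoot at each update is, by the bounds above, at most a fixed polynomial multiple of the number of events in the block times the prevailing value of $\delta$ (here one uses that rounds last less than one time unit and that a block contains at most $\bar kR$ of them); summing the geometric series $\sum_{j\ge0}(1-\alpha^{\bar kR}/2)^{j}$ and simplifying the constants yields a total growth of $\bar r$ bounded by $\frac{4NR}{\alpha^{2NR}}\delta(0)$. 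Consequently $\bar r(t)\le\bar r(0)+\frac{4NR}{\alpha^{2NR}}\delta(0)=\Delta(0)+\frac{4NR}{\alpha^{2NR}}\delta(0)<0.5$ for all $t\le t_k$ by \eqref{eqn:initial}, and therefore $\Delta(t)\le\bar r(t)<0.5$, which completes the inductive step.

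The main obstacle is the third situation above — showing that the MSR phase update \eqref{eqn:phaseupdate} cannot move node $i$ ahead of the foremost normal node by more than one round's worth of frequency drift. This requires the structural facts about the special neighbors $\alpha$ and $\beta$ already used in Lemma~\ref{lmm:alphabeta} (that, under $(2f+1)$-robustness and Assumption~\ref{assum:attackmodel}, the $(f+1)$-th and $(d_i-f)$-th pulses received in a round originate from nodes on the normal containing arc at their emission times), plus a careful accounting of how far those reference nodes have drifted relative to $s$ by the update time. Everything else is routine bookkeeping driven by the geometric decay of $\delta$ furnished by Lemma~\ref{lmm:frequency}.
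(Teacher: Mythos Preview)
Your approach is correct and mirrors the paper's architecture: induction on $k$, invocation of Lemmas~\ref{lmm:frequency}--\ref{lmm:alphabeta} on the inductive interval, a case split into the three event types (normal firing, normal update, malicious pulse), a per-event growth bound on an arc-length quantity governed by the current $\delta$, and a geometric summation using \eqref{eqn:epsilon}.

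The one place where the paper is sharper than your sketch is the update case. You control $r_i(k)$ via $r_\alpha(k_1)\le\bar r(k_1)$ with $k_1$ strictly earlier than $k$; but $\bar r$ is not monotone in $k$ (the foremost normal node may itself have jumped backward at its own update between $k_1$ and $k$), so the assertion that ``$r_i$ can exceed $\bar r$ by at most the drift of $r_\alpha$'' does not follow from $r_\alpha\le\bar r$ at time $k_1$ alone. The paper resolves this cleanly by tracking the \emph{windowed} extremes $M_r(k)=\max\{\bar r(k-\bar k+1),\dots,\bar r(k)\}$ and $m_r(k)$ of \eqref{eqn:Mr}: since $k_1,k_2\in(k-\bar k,k)$, one has $r_\alpha(k_1),r_\beta(k_2)\in[m_r(k-1),M_r(k-1)]$ automatically, and the three scenarios give the single recursion $M_r(k)\le M_r(k-1)+\delta(k-\bar k)$, $m_r(k)\ge m_r(k-1)$. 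Summing directly yields $V(k)\le\Delta(0)+\frac{2\bar kR}{\alpha^{\bar kR}}\delta(0)$, with no separate accounting of continuous drift versus jump overshoot. Your version is easily repaired by replacing $\bar r$ with its running supremum $\sup_{s\le t}\bar r(s)$, which \emph{is} monotone and for which your per-event increment bound does hold; after that substitution your summation goes through and delivers the same constant.
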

	\begin{proof}
	We will prove this lemma by induction. Clearly, by \eqref{eqn:initialphase}, $\Delta(0)<0.5$ holds. Then suppose that at some instant $k\in\mathbb{Z}_{>0}$, it holds for all $t< k$ that $\Delta(t)<0.5$. 
	
An event occurs at the $k$-th instant. That is, there exists a node, labeled as $i$, such that one of the following scenarios is true:
	
	\begin{enumerate}
		\item[\;]  \textit{Scenario I:} Node $i$ is normal and emits a pulse at the $k$-th instant. That is, $\phi_{i}(k)=0$.
		\item[\;]  \textit{Scenario II:} Node $i$ is normal and 
		makes an update at the $k$-th instant. That is, $\phi_{i}(k)=0.5$.
		\item[\;]  \textit{Scenario III:} Node $i$ is misbehaving and emits a false pulse at the $k$-th instant.
	\end{enumerate}
	We shall show $\Delta(k)<0.5$ by respectively discussing the three scenarios. 
	
	\textit{Scenario I:} In this scenario, no normal node updates at time $k$. Therefore, every normal oscillator evolves continuously from time $k-1$ to time $k$. As shown in Fig~\ref{fig:evolve1}, two possible cases might occur then. Particularly, in the first case, we denote by node $q$ the normal node that is mostly ahead of $i$ at time $k$. One thus has
	\begin{equation*}
	\begin{split}
	\phi_q(k) &= \frac{1-\phi_i(k-1)}{\omega_i(k-1)}\omega_q(k-1)+\phi_q(k-1),\\
	\phi_s(k) &= \frac{1-\phi_i(k-1)}{\omega_i(k-1)}\omega_s(k-1)+\phi_s(k-1),
	\end{split}
	\end{equation*}
	and
	\begin{equation}\label{eqn:evolve2}
	\begin{split}
	\bar{r}(k)&= 1-\phi_s(k)+\phi_q(k)\\&=1-\Big(\frac{1-\phi_i(k-1)}{\omega_i(k-1)}\omega_s(k-1)+\phi_s(k-1)\Big)\\&\quad\;\;\;+\Big(\frac{1-\phi_i(k-1)}{\omega_i(k-1)}\omega_q(k-1)+\phi_q(k-1)\Big)\\
	&=1-\phi_s(k-1)+\phi_q(k-1)\\&\quad\;\;\;+(1-\phi_i(k-1))\frac{\omega_q(k-1)-\omega_s(k-1)}{\omega_i(k-1)}\\
	&\leq \bar{r}(k-1)+0.5\delta(k-1)\leq M_r(k-1)+0.5\delta(k-1),
	\end{split}
	\end{equation}
	where the last equality holds since $\Delta(k-1)<0.5$ and thus $0\leq 1-\phi_i(k-1)<0.5$.
	
	Similarly, in the second case, we have
	\begin{equation*}
	\begin{split}
	\phi_i(k) = 0,\; \phi_s(k) = \frac{1-\phi_i(k-1)}{\omega_i(k-1)}\omega_s(k-1)+\phi_s(k-1).
	\end{split}
	\end{equation*} 
	Hence, it follows that
	\begin{equation}\label{eqn:evolve3}
	\begin{split}
\bar{r}(k) &= 1-\phi_s(k) \\
	&=1-(1-\phi_i(k-1))\frac{\omega_s(k-1)-\omega_i(k-1)+\omega_i(k-1)}{\omega_i(k-1)}\\&\quad\;\;\;-\phi_s(k-1)\\
	&= \phi_i(k-1)-\phi_s(k-1)\\&\quad\;\;\;+(1-\phi_i(k-1))\frac{\omega_i(k-1)-\omega_s(k-1)}{\omega_i(k-1)}\\
	&\leq \bar{r}(k-1)+0.5\delta(k-1)\leq M_r(k-1)+0.5\delta(k-1).
	\end{split}
	\end{equation}
	
	Therefore, in either case, we can conclude that
	\begin{equation}\label{eqn:M1}
	M_r(k)=\max\{M_r(k-1),\bar{r}(k)\}\leq  M_r(k-1) +0.5\delta(k-1).
	\end{equation}
	
	Moreover, following similar arguments as in the proof of Lemma~\ref{lmm:alphabeta}, it is easy to see that
	\begin{equation}\label{eqn:scenario1}
	m_r(k)\geq  m_r(k-1).
	\end{equation}
		\begin{figure}[]
		\centering
		\includegraphics[width=0.28\textwidth]{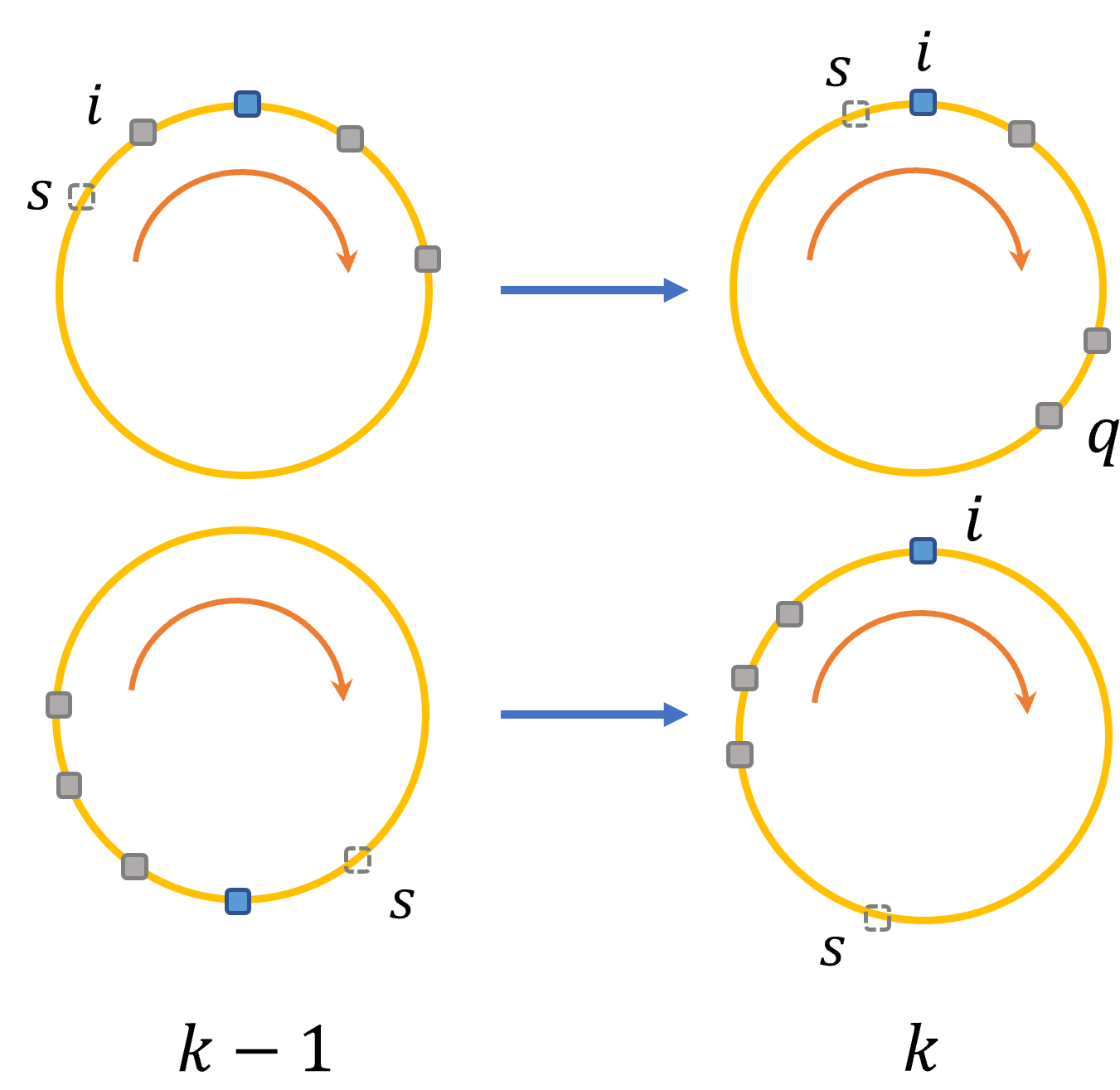}
		\caption{Possible cases if node $i$ emits a pulse at time $k$. The blue square represents the node emitting a pulse or making an update at the corresponding time, the gray squares represent other oscillators, and the dashed square is the virtual node $s$.}
		\label{fig:evolve1}
	\end{figure}
	
	\textit{Scenario II:} In the second scenario, a node $i\in\R$ makes an update at the $k$-th instant, i.e., $\phi_{i}(k)=0.5$.  Since $\Delta(t)<0.5$ holds at all $t< k$, node $i$ receives one and only one pulse from each of its normal in-neighbors within every round. Let us recall the nodes $\alpha$ and $\beta$ defined in the proof of Lemma~\ref{lmm:alphabeta}. We will show 
		\begin{equation}\label{eqn:r}
			r_\alpha (k_1) \in [m_r(k),M_r(k)],\; r_\beta (k_2) \in [m_r(k),M_r(k)],
		\end{equation}
		where $k_1$ and $k_2$ are also defined in the proof of Lemma~\ref{lmm:alphabeta}.
		
		To see this, consider the in-neighbors of node $i$. Under the $f$-total attack model, at time $k_1$, the normal in-neighbor that has the largest relative phase with the virtual node $s$, is the one emitting either the $(f+1)$-th pulse or one of the first $f$ ones. In either case, one can verify that $r_\alpha (k_1) \in [\underline{r}(k_1),\bar{r}(k_1)]$, as node $i$ ignores the first $f$ pulses. Similarly, regardless of whether the adversarial nodes emit pulses or not, among
		all the normal in-neighbors of node $i$, the one taking the least relative phase with $s$ will be the $(d_i-f)$-th one or later to emit a pulse. Therefore, we have $r_\beta (k_2) \in [\underline{r}(k_2),\bar{r}(k_2)]$. On the other hand, notice that node $i$ makes at least one update within $\bar{k}$ steps. Hence, one can verify that 
		\begin{equation}
		 k-\bar{k} <k_1 \leq k_2 < k. 
		\end{equation}
		We have therefore proven \eqref{eqn:r}.
		
		On the other hand, by Algorithm~\ref{alg:synchronize} and Lemma~\ref{lmm:alphabeta}, it can be verified that
		\begin{equation}
			\bar{z}_i(k) = r_\alpha(k_1) - r_i(k_1),\;\underline{z}_i(k) = r_\beta(k_2) - r_i(k_2).
		\end{equation}
		We thus obtain that
		\begin{equation}\label{eqn:rupdate}
			\begin{split}
				r_i(k) &= r_i(k^-) + \frac{1}{2}\bar{z}_i(k)+\frac{1}{2}\underline{z}_i(k)\\
				&=r_i(k^-)+\frac{1}{2}(r_\alpha(k_1) - r_i(k_1))+\frac{1}{2}(r_\beta(k_2) - r_i(k_2)).
			\end{split}
		\end{equation}
	
		For simplicity, let us arrange the instants between $k_1$ and $k$ as
		$k_1=t_0\leq t_1\leq \cdots\leq t_m = k$. By \eqref{eqn:M}, it follows that $1\leq\omega_{s}(t_0)\leq \omega_{s}(t_1)\leq \cdots\leq \omega_{s}(t_m)$, and $\phi_{i}(t_{m})=0.5$.
		Then we can obtain
		 that
		 \begin{equation}\label{eqn:r_i1}
		 \begin{split}
		 	r_i(k^-) = &\;\phi_{i}(t_{m})-\Bigg(\phi_s(t_0)+ \frac{\dist(\phi_{i}(t_{1}),\phi_{i}(t_{0}))}{\omega_{i}(t_0)}\omega_{s}(t_0)\\&+\frac{\dist(\phi_{i}(t_{2}),\phi_{i}(t_{1}))}{\omega_{i}(t_{1})}\omega_{s}(t_{1})+\cdots\\&+\frac{\dist(\phi_{i}(t_{m}),\phi_{i}(t_{m-1}))}{\omega_{i}(t_{m-1})}\omega_{s}(t_{m-1})-1\Bigg).
		 \end{split}
		 \end{equation}
	 Moreover, notice at any $t_\ell$, it follows that 
	 \begin{equation}\label{eqn:r_i2}
	 	\begin{split}
	 	&-\frac{\dist(\phi_{i}(t_{\ell+1}),\phi_{i}(t_{\ell}))}{\omega_{i}(t_\ell)}\omega_{s}(t_\ell)\\=&\dist(\phi_{i}(t_{\ell+1}),\phi_{i}(t_{\ell}))\frac{\omega_i(t_\ell)-\omega_s(t_\ell)-\omega_i(t_\ell)}{\omega_i(t_\ell)}\\\leq& \dist(\phi_{i}(t_{\ell+1}),\phi_{i}(t_{\ell}))(\delta(t_\ell)-1),
	 	\end{split}
	 \end{equation}
	 which holds since $\omega_i(t_\ell)\geq 1.$ Combining \eqref{eqn:r_i1} and \eqref{eqn:r_i2}, we have
	 \begin{equation}\label{eqn:r_i}
	 	r_i(k^-)\leq r_i(k_1)+\delta(k_1)\leq r_i(k_1) +\delta(k-\bar{k}),
	 \end{equation}
 where the first equality holds because node $i$ evolves less than one round from $t_\ell$ to $t_m$ and the last inequality holds by \eqref{eqn:M}. On the other hand, it is easy to see that
		\begin{equation}
			r_i(k^-) \geq r_i(k_1). 
		\end{equation}
Following similar arguments, one can also obtain that 
	\begin{equation}
	r_i(k^-)\leq r_i(k_2) +\delta(k-\bar{k}), 
	\end{equation}
and \begin{equation}
	r_i(k^-) \geq r_i(k_2). 
\end{equation}

		In view of \eqref{eqn:rupdate}, one thus has
		\begin{equation}\label{eqn:rupper}
			\begin{split}
				r_i(k) 
				&\leq \frac{1}{2}(r_i(k_1)+\delta(k-\bar{k}))+\frac{1}{2}(r_i(k_2)+\delta(k-\bar{k}))\\&\qquad+\frac{1}{2}(r_\alpha(k_1) - r_i(k_1))+\frac{1}{2}(r_\beta(k_2) - r_i(k_2))\\&=\frac{1}{2}r_\alpha(k_1)+\frac{1}{2} r_\beta(k_2) +\delta(k-\bar{k})\\&\leq M_r(k-1)+\delta(k-\bar{k}),
			\end{split}
		\end{equation}
		and
		\begin{equation}\label{eqn:rlower}
			\begin{split}
				r_i(k) \geq  \frac{1}{2}r_\alpha(k_1)+\frac{1}{2} r_\beta(k_2) \geq m_r(k-1).
			\end{split}
		\end{equation}
	
	On the other hand, for any other normal agent $j$ which does not update at time $k$, since it evolves freely from time $k-1$ to $k$, we can conclude that 
	\begin{equation}\label{eqn:rj}
	\begin{split}
	r_j(k) &\leq r_j(k-1)+0.5\delta(k-1)\leq M_r(k)+0.5\delta(k-\bar{k}),\\
	r_i(k) &\geq m_r(k).
	\end{split}
	\end{equation}
	Combining \eqref{eqn:rupper} to \eqref{eqn:rj}, it follows that
		\begin{equation}\label{eqn:M2}
			M_r(k)\leq  M_r(k-1) +\delta(k-\bar{k}),\, m_r(k)\geq  m_r(k-1).
		\end{equation}

	\textit{Scenario III:} Finally, we consider the scenario where a false pulse is emitted at time $k$. As that in \textit{Scenario I}, no normal oscillator makes an update at this time, and each of them evolves continuously for less than half a round from time $k-1$ to $k$. We therefore can follow arguments similar to \eqref{eqn:evolve2}--\eqref{eqn:scenario1} and conclude that
		\begin{equation}\label{eqn:M3}
			M_r(k)\leq  M_r(k-1) +0.5\delta(k-1),\, m_r(k)\geq  m_r(k-1).
		\end{equation}
		
		By summarizing the three scenarios, one concludes that \eqref{eqn:M2} always holds and thus
		\begin{equation}
		V(k) \leq V(k-1)+\delta(k-\bar{k}),
		\end{equation}
		where $V(k)$ is defined in \eqref{eqn:Pk}. For simplicity, let us denote 
		\begin{equation}
			\beta \triangleq 1-\frac{\alpha^{\bar{k}R}}{2},
		\end{equation}
		which is convergence rate of the frequency difference given in \eqref{eqn:epsilon}. In view of Lemma~\ref{lmm:frequency}, we thus have
		\begin{equation}\label{eqn:V}
		\begin{split}
		V(k)&\leq V(0)+\sum_{t=0}^{k-\bar{k}}\delta(t)\leq \Delta(0)+\sum_{t=0}^{k-\bar{k}}\beta^{\lfloor t/(\bar{k}R)\rfloor}\delta(0)\\&\leq\Delta(0)+\frac{\bar{k}R}{1-\beta}\delta(0)=\Delta(0)+\frac{2\bar{k}R}{\alpha^{\bar{k}R}}\delta(0),
		\end{split}
		\end{equation}
		where the second inequality holds by \eqref{eqn:epsilon} and \eqref{eqn:P0}. Combining \eqref{eqn:kbar} with \eqref{eqn:initial}, we know
		\begin{equation}
		V(k)<0.5.
		\end{equation}
		Finally, recalling Lemma~\ref{lmm:alphabeta}, we obtain that
		\begin{equation}
		\Delta(k)\leq \bar{r}(k)-\underline{r}(k)\leq V(k)<0.5,
		\end{equation}
		which completes the proof.
	\end{proof}
	
With the above preparations, we are now ready to state the main results of this paper as follows:
	\begin{theorem}\label{thm:frequency}
		Suppose that Assumptions~\ref{assum:attackmodel} and \ref{ass:network} hold and the initial states of normal oscillators satisfy \eqref{eqn:initial}.
	If the normal nodes perform Algorithm~\ref{alg:synchronize} under stealthy attacks, then both the frequencies and phases of normal nodes exponentially reach synchronization. That is,
		\begin{equation}\label{eqn:freqsyn}
			\lim_{t\to\infty} \epsilon(t) = 0,
		\end{equation}
		and
		\begin{equation}\label{eqn:phasyn}
		\lim_{t\to\infty} \Delta(t) = 0.
		\end{equation}
	\end{theorem}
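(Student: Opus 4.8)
The plan is to derive the theorem from Lemmas~\ref{lmm:frequency} and~\ref{lmm:Delta} for the frequencies, and then to run a second contraction argument---this time on the relative phases $r_i$---that mirrors the proof of Lemma~\ref{lmm:frequency}.

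First I would observe that the hypothesis \eqref{eqn:initial} is exactly the hypothesis of Lemma~\ref{lmm:Delta}, so $\Delta(t)<0.5$ holds at every event instant $k$. This discharges the standing condition of Lemma~\ref{lmm:frequency}, so the estimate \eqref{eqn:epsilon} holds unconditionally for all $k$. With $\beta=1-\alpha^{\bar k R}/2\in(0,1)$ and $\bar\omega(k)-\underline\omega(k)\leq\delta(k)\leq\beta^{\lfloor k/(\bar k R)\rfloor}\delta(0)$ (by \eqref{eqn:frequency}), the frequency spread of the normal nodes decays geometrically, which gives \eqref{eqn:freqsyn} with an exponential rate. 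This step is essentially immediate once Lemma~\ref{lmm:Delta} is available.

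The substantive part is \eqref{eqn:phasyn}. I would keep the virtual node $s$, the relative phases $r_i$, and the window quantities $m_r(k),M_r(k),V(k)$ from the proof of Lemma~\ref{lmm:Delta}, and replay the shrinking-set argument of Lemma~\ref{lmm:frequency} with $r_i$, $V$ in the roles of $\omega_i$, $\delta$. Define $\mathcal{R}_r^M(\bar t,t,\epsilon)\triangleq\{i\in\mathcal{R}:r_i(\bar t)>M_r(t)-\epsilon\}$ and $\mathcal{R}_r^m(\bar t,t,\epsilon)\triangleq\{i\in\mathcal{R}:r_i(\bar t)<m_r(t)+\epsilon\}$. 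The structural facts are already isolated in \eqref{eqn:r}--\eqref{eqn:rlower}: when a normal node updates, its new relative phase equals $\tfrac12 r_\alpha(k_1)+\tfrac12 r_\beta(k_2)$ up to an additive error bounded by $\delta(k-\bar k)$, where $r_\alpha(k_1),r_\beta(k_2)$ are relative phases of normal in-neighbors taken at instants inside the last $\bar k$ events, and a non-updating normal node drifts relative to $s$ by at most $0.5\,\delta(k-\bar k)$ before its next update. Invoking Assumption~\ref{assum:attackmodel} and $(2f+1)$-robustness (hence $(2f_i+1)$-robustness by Lemma~\ref{lmm:r'}), the node selected by robustness in the larger of the two sets has at least $f+1$ normal in-neighbors outside it, so at least one of the two pulses that determine its update carries a relative phase bounded away from the current extreme by $\epsilon$; since $r_\alpha(k_1)$ and $r_\beta(k_2)$ enter the update with weight $\tfrac12$, this moves $r_i$ inward by order $\epsilon/2$. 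Running the counting of Lemma~\ref{lmm:frequency} then shows that one of $\mathcal{R}_r^M,\mathcal{R}_r^m$ loses an element every $\bar k$ events and becomes empty within $\bar k R$ events, so that
\begin{equation*}
V(k+\bar k R)\leq\rho\,V(k)+c\,\delta(k-\bar k)
\end{equation*}
for some $\rho\in(0,1)$ and some constant $c>0$ absorbing the $\delta$-errors accumulated over the window (using that $\delta(\cdot)$ is nonincreasing by \eqref{eqn:M}).

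Finally I would close the loop: since $\delta(\ell)\to0$ geometrically by the first part, the forcing term $c\,\delta(k-\bar k)$ is geometrically decaying, so the scalar recursion above forces $V(k)\to0$ geometrically---a routine induction, or an input-to-state-stability estimate for a stable scalar system with summable input. By Lemma~\ref{lmm:alphabeta}, $s$ is the tail of the virtual containing arc, so $\Delta(k)\leq\bar r(k)-\underline r(k)\leq V(k)\to0$, which is \eqref{eqn:phasyn}. I expect the main obstacle to be precisely the coupling term $\delta(k-\bar k)$: unlike the purely real-valued MSR setting, the relative-phase update is not an exact convex combination of normal relative phases---both because of the frequency mismatch accrued within a round and because non-updating nodes keep drifting---so one does not obtain a clean contraction but a contraction perturbed by the vanishing frequency spread, and the argument must track that this perturbation is summable so that it cannot sustain a positive limit of $V(k)$, while still yielding an exponential rate.
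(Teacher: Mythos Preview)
Your overall plan coincides with the paper's: derive \eqref{eqn:freqsyn} by feeding Lemma~\ref{lmm:Delta} into Lemma~\ref{lmm:frequency}, and then rerun a shrinking-set argument on the relative phases $r_i$ to obtain a perturbed contraction $V(k+\bar kR)\le \rho\,V(k)+c\,\delta(\cdot)$, from which $V(k)\to 0$ follows because $\delta$ is summable. The paper does exactly this, and also closes with $\Delta(k)\le V(k)$ via Lemma~\ref{lmm:alphabeta}.

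There is, however, a technical gap in how you ``replay'' Lemma~\ref{lmm:frequency}. You define the sets $\mathcal{R}_r^M,\mathcal{R}_r^m$ with the \emph{fixed} thresholds $M_r(t),m_r(t)$. In the frequency case this works because $M_\omega(k)$ is nonincreasing and $m_\omega(k)$ nondecreasing (eq.~\eqref{eqn:M}); hence $M_\omega(t)$ remains a valid upper bound at all later $\bar t$, which is what makes both ``once out, stays out'' and the ``pull-in'' step go through. For phases this fails: by \eqref{eqn:M2} one only has $M_r(\bar t)\le M_r(t)+\sum_{s=t}^{\bar t-1}\delta(s-\bar k)$, so $M_r(\bar t)$ can exceed $M_r(t)$. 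Consequently a non-updating node can drift back into $\mathcal{R}_r^M$, and when the robustness node updates, the ``other'' term in $\tfrac12 r_\alpha+\tfrac12 r_\beta$ is bounded only by $M_r(\bar t)$, not $M_r(t)$, so the inward step $\epsilon/2$ can be wiped out by accumulated drift. Absorbing $\delta$ only in the final recursion for $V$ is too late; the set dynamics themselves need the correction.

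The paper's fix is to replace the fixed thresholds by moving ones $\overline{w}(\bar t+1)=\overline{w}(\bar t)+\delta(\bar t-\bar k)$ and $\underline{w}(\bar t+1)=\underline{w}(\bar t)-\delta(\bar t-\bar k)$, define $\mathcal{R}_r^M(\bar t,\varepsilon(\bar t))=\{i:r_i(\bar t)>\overline{w}(\bar t)-\varepsilon(\bar t)\}$, and let $\varepsilon(\bar t+1)=\tfrac12\varepsilon(\bar t)-\tfrac12\sigma(t)$ with a small $\sigma(t)=\sigma V(t)$ chosen so that $\varepsilon(\bar t)$ stays positive over $\bar kR$ steps and the two sets remain disjoint despite the thresholds moving apart. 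With these moving thresholds, both monotonicity of the sets and the pull-in estimate go through exactly as in \eqref{eqn:rupperbound}--\eqref{eqn:rlowerbound}, yielding $V(t+\bar k(R+1))\le (1-\sigma)V(t)+\gamma\,\delta(t-\bar k)$, and the conclusion follows from a standard summable-perturbation lemma. Your anticipated obstacle is the right one; the missing ingredient is precisely this drift-absorbing redefinition of the shrinking sets.
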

	\begin{proof}
		Combining Lemmas~\ref{lmm:frequency} and \ref{lmm:Delta}, we conclude that \eqref{eqn:epsilon} holds for any $t\in\mathbb{Z}_{+}$. Since $0<1-\frac{\alpha^{\bar{k}R}}{2}<1$, we can arrive at \eqref{eqn:freqsyn}.
	
		On the other hand, in order to show \eqref{eqn:phasyn}, we follow a structure similar to that of Lemma~\ref{lmm:frequency}. However, subtle modifications will be taken accounting for the coupling between the phase and frequency. 
		Given any fixed $t\geq 0$, suppose $V(t)>0$. Let $\varepsilon$ and $\sigma$ be two positive constants such that
		\begin{equation}\label{eqn:sigma}
		\begin{split}
		\varepsilon+\sigma < 0.5, \; \sigma<\frac{0.5^{\bar{k}R}}{1-0.5^{\bar{k}R}}\varepsilon.
		\end{split}
		\end{equation} 
		Moreover, for any $\bar{t}\in\{t,t+1,\cdots, t+\bar{k}R-1\}$, we recursively define
		\begin{equation}\label{eqn:epsilondef}
		\varepsilon(\bar{t}+1)\triangleq 0.5\varepsilon(\bar{t})-0.5\sigma(t),
		\end{equation}
		where 
		\begin{equation}
			\varepsilon(t) \triangleq \varepsilon V(t),\; \sigma(t) \triangleq \sigma V(t).
		\end{equation}
		One can verify that
		\begin{equation}
		\begin{split}
			\varepsilon(t+\bar{k}R) &= 0.5^{\bar{k}R}\varepsilon(t)-\sigma(t)\sum_{k=1}^{\bar{k}R}0.5^k \\&=\Big(0.5^{\bar{k}R}\varepsilon-(1-0.5^{\bar{k}R}) \sigma\Big)V(t)>0,
		\end{split}
		\end{equation}
		where the last inequality holds by \eqref{eqn:sigma}. Combining it with \eqref{eqn:epsilondef} yields
		\begin{equation}
		0<\varepsilon(\bar{t}+1)<\varepsilon(\bar{t}).
		\end{equation}
		Moreover, let us introduce sequences $\{\overline{w}(\bar{t})\}$ and $\{\underline{w}(\bar{t})\}$ by defining
		\begin{equation}
		\begin{split}
		\overline{w}(\bar{t}+1) &\triangleq \overline{w}(\bar{t}) + \delta(\bar{t}-\bar{k}),\;\overline{w}(t) \triangleq M_r(t)-\sigma(t),\\
		\underline{w}(\bar{t}+1) &\triangleq \underline{w}(\bar{t})- \delta(\bar{t}-\bar{k}),\;\underline{w}(t) \triangleq m_r(t)+\sigma(t).\\
		\end{split}
		\end{equation}
		We will show that the following inequality holds:
		\begin{equation}\label{eqn:barw}
			M_r(\bar{t})\leq \overline{w}(\bar{t}) + \sigma(t).
		\end{equation}
		We prove this by induction. Clearly, it is satisfied at $\bar{t}=t$. Suppose it also holds at some $\bar{t}$. By \eqref{eqn:M2}, we obtain that
		\begin{equation}
			M_r(\bar{t}+1)\leq M_r(\bar{t})+ \delta(\bar{t}-\bar{k})\leq \overline{w}(\bar{t}+1) + \sigma(t),
		\end{equation}
		which proves \eqref{eqn:barw}. Similarly, one can verify that
		\begin{equation}
			m_r(\bar{t})\geq \underline{w}(\bar{t}) - \sigma(t).
		\end{equation}
		
		Next, let us define two sets for any $\epsilon\in\mathbb{R}$ and $\bar{t}\geq t$:
		\begin{equation}
		\begin{split}
		\mathcal{R}_r^M(\bar{t},\varepsilon(\bar{t}))\triangleq\{i\in\mathcal{R}: r_{i}(\bar{t})>\overline{w}(\bar{t})-\varepsilon(\bar{t})\},\\
		\mathcal{R}_r^m(\bar{t},\varepsilon(\bar{t}))\triangleq\{i\in\mathcal{R}: r_{i}(\bar{t})<\underline{w}(\bar{t})+\varepsilon(\bar{t})\}.
		\end{split}
		\end{equation}
		We claim that these sets are always disjoint. To see this, notice that for any $i\in \mathcal{R}_r^M(\bar{t},\varepsilon(\bar{t}))$, it holds that
		\begin{equation}
		\begin{split}
		r_{i}(\bar{t})&>\overline{w}(\bar{t})-\varepsilon(\bar{t})> M_r(t)-\sigma(t)-\varepsilon(t)\\&=M_r(t)-(\sigma+\varepsilon)V(t)+\sum_{k=t}^{\bar{t}-1}\delta(k-\bar{k}).
		\end{split}
		\end{equation} 
		Similarly, any $j\in \mathcal{R}_r^m(\bar{t},\varepsilon(\bar{t}))$ has
		\begin{equation}
		r_{j}(\bar{t})<\underline{w}(\bar{t})+\varepsilon(\bar{t})<m_r(t)+(\sigma+\epsilon)V(t)+\sum_{k=t}^{\bar{t}-1}\delta(k-\bar{k}).
		\end{equation} 
		Since $\sigma+\epsilon<0.5$, we conclude that the two sets do not intersect at any $\bar{t}\geq t$.
		
		Then let us consider the sets $\mathcal{R}_r^M(t,\varepsilon(t))$ and $\mathcal{R}_r^m(t,\varepsilon(t))$. Suppose that both of them are nonempty. We shall next prove that,  $\mathcal{R}_r^M(\bar{t},\varepsilon(\bar{t}))$ is nonincreasing in the number of its elements along $\bar{t}=t,t+1,\cdots,t+\bar{k}$. To see this, it is sufficient to show that, any normal node $j$ not in the set $\mathcal{R}_r^M(\bar{t},\varepsilon(\bar{t}))$ also not in the set $\mathcal{R}_r^M(\bar{t}+1,\varepsilon(\bar{t}+1))$. 
		This clearly holds if node $j$ does not make an update at time $\bar{t}$ since by \eqref{eqn:M1} and \eqref{eqn:M3}, it follows that
		\begin{equation}
		\begin{split}
			r_j(\bar{t}+1)&\leq r_j(\bar{t})+0.5\delta(\bar{t})\leq \overline{w}(\bar{t})-\varepsilon(\bar{t})+0.5\delta(\bar{t})\\&<\overline{w}(\bar{t}+1)-\varepsilon(\bar{t}+1).
			\end{split}
		\end{equation}
		On the other hand, let us consider the case where node $j$ makes an update. Let $t_{ij}(\bar{t})$ be the time that it takes for node $i$ to first emits a pulse and then node $j$ makes an update at time $\bar{t}$. By Algorithm~\ref{alg:synchronize}, it is not difficult to conclude that $r_\beta(\bar{t}-t_{\beta j}(\bar{t}))\leq r_j(\bar{t}-t_{\beta j}(\bar{t}))\leq r_j(\bar{t})$. It thus follows from \eqref{eqn:rupper} that
		\begin{equation}\label{eqn:rupperbound}
			\begin{split}
				r_j\left(\bar{t}+1\right)&\leq 0.5 r_\alpha(\bar{t}-t_{\alpha j}(\bar{t}))+0.5 r_\beta(\bar{t}-t_{\beta j}(\bar{t})) +\delta(\bar{t}-\bar{k})\\
				&\leq 0.5 M_r(\bar{t})+ 0.5 [\overline{w}(\bar{t})-\varepsilon(\bar{t})] +\delta(\bar{t}-\bar{k})\\
				&\leq 0.5 [\overline{w}(\bar{t}) + \sigma(t)]+ 0.5 [\overline{w}(\bar{t})-\varepsilon(\bar{t})] +\delta(\bar{t}-\bar{k})\\&= \overline{w}(\bar{t}) +0.5[\sigma(t)-\varepsilon(\bar{t})]+\delta(\bar{t}-\bar{k})\\&=\overline{w}(\bar{t}+1)-\varepsilon(\bar{t}+1).
			\end{split}
		\end{equation}
		Similarly, we can check that the cardinality of $\mathcal{R}_r^m(\bar{t},\varepsilon(\bar{t}))$ is also nonincreasing in $\bar{t}=t,t+1,\cdots,t+\bar{k}$. To see this, let us consider any normal node $j$ not in the set $\mathcal{R}_r^m(\bar{t},\varepsilon(\bar{t}))$. That is, $r_j(\bar{t})\geq \underline{w}(\bar{t})+\varepsilon(\bar{t})$. Suppose it makes an update at $\bar{t}$. In view of \eqref{eqn:r_i}, it follows that
		\begin{equation}
			\begin{split}
				r_\alpha(\bar{t}-t_{\alpha j}(\bar{t}))&\geq r_j(\bar{t}-t_{\alpha j}(\bar{t})) \\&\geq r_j(\bar{t}) -\delta(\bar{t}-t_{\alpha j}(\bar{t})).
			\end{split}
		\end{equation}
		Therefore, one knows that
		\begin{equation}\label{eqn:rlowerbound}
			\begin{split}
				r_j\left(\bar{t}+1\right)&\geq 0.5r_\alpha(\bar{t}-t_{\alpha j}(\bar{t}))+0.5 r_\beta(\bar{t}-t_{\beta j}(\bar{t}))\\
				&\geq 0.5 [\underline{w}(\bar{t})+\varepsilon(\bar{t})-\delta(\bar{t}-t_{\alpha j}(\bar{t}))]+0.5 m_r(\bar{t})\\
				&\geq m_r(\bar{t})+0.5\varepsilon(\bar{t})-0.5\delta(\bar{t}-\bar{k})\\&\geq \underline{w}(\bar{t}+1)+\varepsilon(\bar{t}+1).
			\end{split}
		\end{equation}
		That is, node $j$ is also not in the set $\mathcal{R}_r^m(\bar{t}+1,t,\varepsilon(\bar{t}+1))$.
		
		Next, we show that at least one of the sets $\mathcal{R}_r^M(\bar{t},\varepsilon(\bar{t}))$ and $\mathcal{R}_r^m(\bar{t},\varepsilon(\bar{t}))$ will become empty in finite time. To see this, since the network is $(2f+1)$-robust,  there exists a normal node, labeled as $i$, in either $\mathcal{R}_r^M(t,\varepsilon(t))$ or $\mathcal{R}_r^m(t,\varepsilon(t))$, such that it has at least $2f+1$ in-neighbors from outside the set it belongs to.
		
		Without loss of generality, let $i\in \mathcal{R}_r^M(t,\varepsilon(t))$. It has no less than $2f+1$ neighbors in $\mathcal{V}\backslash\mathcal{R}_r^M(t,\varepsilon(t))$, where at most $f$ of them are from misbehaving nodes. Therefore, at least $f+1$ of these in-neighbors are normal. As discussed above, these nodes will remain outside of $\mathcal{R}_r^M(\bar{t},\varepsilon(\bar{t}))$ for $\bar{t}=t,\cdots,t+\bar{k}$, and one of them would be chosen as node $\beta$.  Suppose that node $i$ makes the update at $\bar{t}=t,\cdots,t+\bar{k}$. We can follow a similar arguments as \eqref{eqn:rupperbound} and obtain that $r_i\left(\bar{t}+1\right)\leq \overline{w}(\bar{t}+1)-\varepsilon(\bar{t}+1)$. This indicates that node $i$ is outside of $\mathcal{R}_r^M(t+\bar{k},\varepsilon(t+\bar{k}))$.
		
		Similarly, if $i\in \mathcal{R}_r^m(t,\epsilon(t))$, we obtain an analogous result that node $i$ will be outside of $\mathcal{R}_r^m(t+\bar{k},\epsilon(t+\bar{k}))$. We therefore conclude that $|\mathcal{R}_r^M(t+\bar{k},\epsilon(t+\bar{k}))|+|\mathcal{R}_r^m(t+\bar{k},\epsilon(t+\bar{k}))|<|\mathcal{R}_r^M(t,\epsilon(t))|+|\mathcal{R}_r^m(t,\epsilon(t))|.$
		
		As long as both $\mathcal{R}_r^M(t+\bar{k},\epsilon(t+\bar{k}))$ and $\mathcal{R}_r^m(t+\bar{k},\epsilon(t+\bar{k}))$ are nonempty, we can repeat the above analysis and conclude that at least one of these two sets will shrink by $1$ after $\bar{k}$ time steps. Moreover, since both $|\mathcal{R}_r^M(t,\epsilon(t))|$ and $|\mathcal{R}_r^m(t,\epsilon(t))|$ are upper bounded by $R$, one of these sets would be empty after $\bar{k}R$ steps. Namely, one of following statements must be true:
		\begin{enumerate}
			\item $\mathcal{R}_r^M(t+\bar{k}R,\epsilon(t+\bar{k}R))=\varnothing,$
			\item $\mathcal{R}_r^m(t+\bar{k}R,\epsilon(t+\bar{k}R))=\varnothing.$
		\end{enumerate}
		Without loss of generality, we assume that the first statement holds. In this case, all normal nodes are outside the set $\mathcal{R}_r^M(t+\bar{k}R,\epsilon(t+\bar{k}R))$. We therefore conclude that
		\begin{equation}
			M_r(t+\bar{k}(R+1))\leq \overline{w}(t+\bar{k}(R+1))-\varepsilon(t+\bar{k}(R+1)).
		\end{equation}
		On the other hand, one knows
		$m_r(t+\bar{k}(R+1))\geq m_r(t).$
		Therefore, one concludes that                                                                                           
		\begin{equation}
			\begin{split}
				&V(t+\bar{k}(R+1))\\&\leq \overline{w}(t+\bar{k}(R+1))-\varepsilon(t+\bar{k}(R+1))-m_r(t)\\&\leq (1-\sigma)V(t)+\gamma\delta(t-\bar{k}).
			\end{split}
		\end{equation}
		where $\gamma =\bar{k}(R+1)$. Hence, it follows that
		\begin{equation}\label{eqn:error}
			\begin{split}
				V&(t+r\bar{k}(R+1))\leq (1-\sigma)^rV(t)\\&+\gamma\sum_{l=0}^{r-1}(1-\sigma)^{r-1-l}\delta(t+l\bar{k}R-\bar{k}).
			\end{split}
		\end{equation}
		As proved in Theorem~\ref{thm:frequency}, we conclude that $$\lim_{t\rightarrow \infty}\delta(t)=0.$$ In view of \cite[Lemma 7]{nedic2010constrained}, the second term in the right hand side of \eqref{eqn:error} goes to $0$ as $r\to\infty$. Therefore, it holds that
		\begin{equation}
			\lim_{r\rightarrow \infty}V(t+r\bar{k}(R+1))=0.
		\end{equation}
		Since the above equation holds true for any $t\in\mathbb{Z}_{\geq 0}$, one concludes that
		\begin{equation}
			\lim\limits_{k\to\infty}V(k) = 0.
		\end{equation}
		Combining this with Lemma~\ref{lmm:alphabeta}, we finally complete the proof.
	\end{proof}

Based on the results presented above, we can conclude that Algorithm~\ref{alg:synchronize} is resilient to network misbehaviors and node failures:
	\begin{theorem}\label{thm:main}
		Suppose that Assumptions~\ref{assum:attackmodel} and \ref{ass:network} hold and the initial states of normal oscillators satisfy \eqref{eqn:initial}.
With Algorithm~\ref{alg:synchronize}, all the normal nodes either achieve the resilient synchronization or detect the presence of the misbehaving nodes. 
	\end{theorem}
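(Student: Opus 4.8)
The plan is to derive Theorem~\ref{thm:main} from Theorem~\ref{thm:frequency} together with the detection rule in step~4a) of Algorithm~\ref{alg:synchronize}, via a dichotomy on whether the misbehavior is ever flagged. Recall that a normal node $i$ declares an attack precisely when $c_i>d_i$ occurs in some round, i.e.\ when within a single round it receives more pulses than it has in-neighbors. Call the attack \emph{detected} if this ever happens for some normal node, and \emph{stealthy} otherwise; this is exactly the stealthiness hypothesis under which Lemmas~\ref{lmm:frequency}--\ref{lmm:Delta} and Theorem~\ref{thm:frequency} have been established, and the two cases are manifestly exhaustive, so I would treat them in turn.

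In the detected case, step~4a) has the flagging node report the presence of misbehaving nodes to the system, which is exactly the second alternative, and nothing further is needed. In the stealthy case, no normal node ever flags, so the whole run is produced under a stealthy attack and Assumptions~\ref{assum:attackmodel} and \ref{ass:network} together with the standing initial condition \eqref{eqn:initial} hold throughout; I would then invoke Lemmas~\ref{lmm:frequency} and \ref{lmm:Delta} and Theorem~\ref{thm:frequency} and verify the two clauses of Definition~\ref{def:resilientCon}. For \emph{Synchronization}, Theorem~\ref{thm:frequency} gives $\lim_{t\to\infty}\epsilon(t)=\lim_{t\to\infty}\tfrac12\delta(t)=0$ and $\lim_{t\to\infty}\Delta(t)=0$; since $\delta(t)\geq|\omega_i(t)-\omega_j(t)|$ for all normal $i,j$ and all normal phases lie on an arc of length $\Delta(t)$, this gives $\lim_t|\omega_i(t)-\omega_j(t)|=0$ and $\lim_t|\phi_i(t)-\phi_j(t)|=0$. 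For \emph{Safety}, iterating \eqref{eqn:M} of Lemma~\ref{lmm:frequency} yields $m_\omega(k)\geq m_\omega(0)\geq 1$ and $M_\omega(k)\leq M_\omega(0)\leq 1+\delta(0)$ by the frequency normalization, hence $\omega_i(k)\in[m_\omega(k),M_\omega(k)]\subseteq\Omega$; combined with $\Delta(t)<0.5$ from Lemma~\ref{lmm:Delta} (extended to continuous time as described below), all requirements of Definition~\ref{def:resilientCon} are met, i.e.\ the first alternative.

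The substance of the result --- contraction of the frequency spread, the invariance $\Delta<0.5$, and convergence of the containing arc --- is already supplied by Lemmas~\ref{lmm:frequency}, \ref{lmm:Delta} and Theorem~\ref{thm:frequency}, so the theorem itself is mostly bookkeeping. I expect two points to require a little care. The first is conceptual: checking that ``never flagged'' coincides with the stealthiness hypothesis used upstream, so that Theorem~\ref{thm:frequency} legitimately applies to the entire trajectory in the stealthy case. The second, and the only genuinely computational one, is upgrading the event-instant bounds of Lemmas~\ref{lmm:frequency} and \ref{lmm:Delta} to the continuous-time statements required by the Safety clause: on each inter-event interval no node fires or updates and every phase --- including that of the slowest-running virtual node $s$ --- evolves continuously, so each relative phase $r_i(t)$ is nondecreasing there and continuous across firing events, whence $\sup_t\Delta(t)$ is controlled by the same quantity $V(\cdot)$ that the proof of Lemma~\ref{lmm:Delta} already keeps strictly below $0.5$; since frequencies are constant between update events, the bound $\omega_i(t)\in\Omega$ extends likewise. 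I would flag this last extension as the main (if routine) obstacle.
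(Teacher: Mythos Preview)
Your proposal is correct and follows essentially the same approach as the paper: a dichotomy between the detected case (handled by step~4a)) and the stealthy case (handled by Theorem~\ref{thm:frequency}), followed by a verification of Definition~\ref{def:resilientCon} via \eqref{eqn:epsilon} and \eqref{eqn:Delta}. If anything, you are more careful than the paper's own proof, which does not spell out the iteration of \eqref{eqn:M} to obtain $\omega_i(t)\in\Omega$ and does not mention the passage from event instants to continuous time at all.
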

\begin{proof}
As previously discussed, the initial condition \eqref{eqn:initial} ensures that each normal oscillator fires only once within each round. This means that if any misbehaving node fires more than once within any single round, its normal neighbors running Algorithm~\ref{alg:synchronize} can directly detect the anomaly.

On the other hand, if all misbehaving nodes perform stealthily, Theorem~\ref{thm:frequency} guarantees that synchronization of both frequencies and pulses is achieved among normal nodes, even though they are unaware of the presence of misbehaviors.

Finally, the safety condition defined in Definition~\ref{def:resilientCon} is ensured by \eqref{eqn:epsilon} and \eqref{eqn:Delta}. Therefore, we can conclude that Algorithm~\ref{alg:synchronize} is resilient to misbehaviors and satisfies the safety condition defined in Definition~\ref{def:resilientCon}.
\end{proof}

It is worth noting that when considering homogeneous oscillators with identical frequencies, the initial condition \eqref{eqn:initial} is relaxed to $\Delta(0) < 0.5$, namely, the initial length of the containing arc is less than $0.5$. This observation is consistent with the findings in \cite{iori2021resilient,wang2018pulse,wang2020attack}.
In this context, Algorithm~\ref{alg:synchronize} can be regarded as an extension of existing solutions to a more general case where each oscillator may start with a different frequency.

\begin{remark}
Let us recall the initial condition \eqref{eqn:initial}. From \eqref{eqn:V}, it is not difficult to see that \eqref{eqn:initial} can be relaxed to
\begin{equation}\label{eqn:initial_k}
\Delta(0)+\frac{2\bar{k}R}{\alpha^{\bar{k}R}}\delta(0)<0.5,
\end{equation}
where $\bar{k}\leq 2N$ is a constant such that each normal oscillator makes at least one update during $\bar{k}$ steps.
\end{remark}

\begin{remark}
The initial condition \eqref{eqn:initial} is conservative since we consider the worst-case malicious behaviors in analysis. However, in practical scenarios, the proposed algorithm can function effectively under much more relaxed conditions than \eqref{eqn:initial}. We will provide numerical examples in Section~\ref{sec:simulation} to further demonstrate this point.
\end{remark}

\section{Resilient Synchronization using Relative Frequencies}\label{sec:extend}
Clearly, in Algorithm~\ref{alg:synchronize}, the utilization of packet-based communication is inevitable for transmitting the real-valued absolute frequencies $\{\omega_i\}$. However, since each oscillator operates with its own clock, it is often the case that the absolute frequencies are not directly accessible to individual agents. In such situations, normal oscillators can update their frequencies only using relative values. Inspired by it, this section will propose a resilient synchronization algorithm based on relative frequencies. 

In this approach, every normal oscillator is required to emit two pulses within a single round, specifically when it reaches $1-\zeta$ and $1$, where $0<\zeta<0.5$. We describe the complete algorithm in Algorithm~\ref{alg:relative}. It is worth noting that in this case, oscillator $i$ does not need knowledge of the absolute frequencies. Instead, it adjusts its frequency by a factor of $\eta_i(t)$ relative to its previous frequency, as depicted in Step 4.b) of Algorithm~\ref{alg:relative}. Furthermore, the algorithm can be implemented using simple pulses, eliminating the need for transmitting real-valued messages.

%

	\begin{algorithm}[h!] 
	1:\: The phase $\phi_{i}(t)$ evolves from $0$ to $1$.
	
	2:\: Once $\phi_{i}(t)$ reaches $1$, oscillator $i$ sets $\gamma_i=1$, emits an ``end pulse'', and resets its phase to $0$. On the other hand, if $\phi_{i}(t)=1-\zeta$, oscillator $i$ emits a `` start pulse''.
	
3:\: When oscillator $i$ receives an end pulse, it updates $c_i=c_i+1$. The calculations for $\bar{z}_{i}(t)$ and $\underline{z}_{i}(t)$ are performed in the same manner as described in Step 3 of Algorithm~\ref{alg:synchronize}.

	4:\: When $\phi_{i}(t)=0.5$ and $\gamma_i=1$, oscillator $i$ executes the following:
	
	\begin{enumerate}[leftmargin = 30 pt]
		\item[a):] If $c_{i}>d_{i}$, oscillator $i$ detects an attack and informs the system. It does not make any update and sets $\phi_{i}\left(t\right)=\phi_{i}(t^{-})$ and $\omega_{i}\left(t\right)=\omega_{i}(t^{-})$.
		
		\item[b):] If $c_{i} \leq d_{i}$, oscillator $i$ updates its phase by \eqref{eqn:phaseupdate}. Moreover, it calculates the relative frequency with respect to any of its in-neighbors $j\in\mathcal{N}^+_i$ as
		$$\eta_{ij}(t^-) = \zeta/(\phi_{i}(\hat{t}_j)-\phi_i(\tilde{t}_j)).$$ 
		Here, $\phi_{i}(\hat{t}_j)$ and $\phi_i(\tilde{t}_j)$ represent the phases of oscillator $i$ when it receives the two pulses from oscillator $j$ in the current round.

		Then, it removes $f_i$ largest and $f_i$ smallest values in $\{\eta_{ij}(t^-)\}$, where $f_i$ is given in \eqref{eqn:f0}. Denote by $\mathcal{J}_i(t)$ the set of remaining states. Oscillator $i$ updates its frequency as
		$$	\omega_i(t) = \eta_i(t)\omega_i(t^{-}),$$ where
		\begin{equation}\label{eqn:frequency2}
		\eta_i(t) = a_{ii}(t)+\sum_{j\in \mathcal{J}_i(t)} a_{ij}(t) \eta_{ij}(t^-).
		\end{equation}
		The weights are constrained to have a lower bound of $\alpha>0$ and satisfy the condition $a_{ii}(t)+\sum_{j\in \mathcal{J}_i(t)} a_{ij}(t)=1.$
	\end{enumerate}
	
	5:\: Oscillator $i$ resets $\gamma_i=0$ and  $c_i=0$.\\
	\caption{Resilient synchronization of heterogeneous pulse-coupled oscillators by relative frequencies}
	\label{alg:relative}
\end{algorithm}

We have the following theorem:
\begin{theorem}\label{thm:relative}
		Suppose that Assumptions~\ref{assum:attackmodel} and \ref{ass:network} hold and the initial states of normal oscillators satisfy \begin{equation}\label{eqn:initial2}
		\Delta(0)+\frac{2\bar{k}R}{\alpha^{\bar{k}R}}\delta(0)<0.5-\zeta.
		\end{equation} 
With Algorithm~\ref{alg:relative}, all the normal nodes either achieve the resilient synchronization or detect the presence of the misbehaving nodes. 
\end{theorem}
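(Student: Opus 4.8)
The plan is to reduce Theorem~\ref{thm:relative} to the analysis already developed for Algorithm~\ref{alg:synchronize}. The first observation is that the phase-related part of Algorithm~\ref{alg:relative} --- the counter $c_i$, the quantities $\bar z_i,\underline z_i$, the phase update \eqref{eqn:phaseupdate}, and the detection rule $c_i>d_i$ --- is identical to that of Algorithm~\ref{alg:synchronize} once ``pulse'' is read as ``end pulse''; hence Lemma~\ref{lmm:alphabeta} and the phase-evolution estimates used in the proof of Lemma~\ref{lmm:Delta} transfer unchanged. The only genuinely new element is the relative-frequency rule \eqref{eqn:frequency2}, and the additional term $\zeta$ in \eqref{eqn:initial2} is exactly what is needed to make that rule behave like the absolute rule \eqref{eqn:frequencyupdate}.

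The key step I would establish is the equivalence: whenever $\Delta(t)<0.5-\zeta$ on the relevant interval, for every normal in-neighbor $j$ of a normal node $i$ one has $\eta_{ij}(t^-)=\omega_j(\tau_j)/\omega_i(t^-)$ for some instant $\tau_j$ in the current round of $i$. Indeed, between emitting its start pulse (at phase $1-\zeta$) and its end pulse (at phase $1$), node $j$ never reaches phase $0.5$, so its frequency is constant there and the two pulses are separated by exactly $\zeta/\omega_j(\tau_j)$; and since $\Delta(t)<0.5-\zeta$ keeps the clockwise distance between $\phi_i$ and $\phi_j$ strictly below $0.5-\zeta$, the phase $\phi_i$ stays strictly on one side of $0.5$ throughout $[\tilde t_j,\hat t_j]$, so $i$ performs no phase-$0.5$ update on that interval and its accumulated phase increment is $\omega_i(t^-)\cdot\zeta/\omega_j(\tau_j)$. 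Substituting into the definition of $\eta_{ij}$ gives the claim. Because $\omega_i(t^-)$ is a common positive factor, discarding the $f_i$ largest and $f_i$ smallest entries of $\{\eta_{ij}(t^-)\}$ is, on the normal entries, the same as discarding the corresponding absolute frequencies; since at most $f_i$ entries are produced by misbehaving nodes, every element of $\mathcal{J}_i(t)$ corresponds to a frequency lying in $[m_\omega(t),M_\omega(t)]$, and
\[
\omega_i(t)=\eta_i(t)\,\omega_i(t^-)=a_{ii}(t)\,\omega_i(t^-)+\sum_{j\in\mathcal{J}_i(t)}a_{ij}(t)\,\omega_j(\tau_j)
\]
is the same convex combination of normal frequencies as in \eqref{eqn:frequencyupdate}; in particular $\omega_i(t)\in[m_\omega(t),M_\omega(t)]\subseteq\Omega$.

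Granted this equivalence, I would re-run the chain Lemma~\ref{lmm:frequency} $\to$ Lemma~\ref{lmm:Delta} $\to$ Theorem~\ref{thm:frequency} with the threshold $0.5$ replaced by $0.5-\zeta$ and with the initial condition \eqref{eqn:initial2} in place of \eqref{eqn:initial_k}. The proof of Lemma~\ref{lmm:frequency} requires nothing beyond invoking the equivalence (the received values $\omega_j(\tau_j)$ are sampled within a window of $\bar k$ events and hence still lie in the boxes $[m_\omega(\cdot),M_\omega(\cdot)]$), so \eqref{eqn:M} and the geometric decay \eqref{eqn:epsilon} still hold; the proof of Lemma~\ref{lmm:Delta} is repeated with threshold $0.5-\zeta$, and its three-scenario argument again yields $V(k)\le\Delta(0)+\frac{2\bar kR}{\alpha^{\bar kR}}\delta(0)<0.5-\zeta$, whence $\Delta(k)<0.5-\zeta<0.5$. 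The apparent circularity --- the equivalence needs $\Delta<0.5-\zeta$, which is proved via the frequency decay, which needs the equivalence --- is broken exactly as in the original, since Lemma~\ref{lmm:Delta} is an induction on $k$ and all frequency updates relevant at step $k$ occur at times $<k$, where $\Delta<0.5-\zeta$ is the induction hypothesis. The phase-synchronization part of Theorem~\ref{thm:frequency} then carries over verbatim. Finally, as in the proof of Theorem~\ref{thm:main}, \eqref{eqn:initial2} forces every normal node to fire (emit exactly one end pulse) once per round, so any misbehaving node emitting two or more end pulses in a round triggers the $c_i>d_i$ detection; otherwise all misbehaviors are stealthy and the lemmas above deliver resilient synchronization of phases and frequencies together with the safety requirement of Definition~\ref{def:resilientCon}.

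The main obstacle is the equivalence step, specifically the geometric bookkeeping showing that the measurement window $[\tilde t_j,\hat t_j]$ stays inside a region of node $i$'s phase circle bounded away from $0.5$ \emph{simultaneously} for all normal in-neighbors $j$ (their windows fall at slightly different points of $i$'s round) --- this is precisely where the budget $\zeta$ in \eqref{eqn:initial2} is spent, and it is the reason the threshold is tightened from $0.5$ to $0.5-\zeta$. A secondary point, since pulses are anonymous, is to argue that $i$ correctly pairs each start pulse with the end pulse of the same sender: under the near-synchrony guaranteed by $\Delta(t)<0.5-\zeta$ the relative firing order of the normal nodes is preserved within a round, so the pairs originating from normal neighbors are matched correctly, while any mismatch caused by the at most $f$ misbehaving neighbors only corrupts entries of $\{\eta_{ij}(t^-)\}$ that are subsequently removed by the MSR trimming.
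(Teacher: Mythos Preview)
Your proposal is correct and follows essentially the same route as the paper. The paper's proof is much terser: it asserts that under \eqref{eqn:initial2} the argument of Lemma~\ref{lmm:Delta} yields $\Delta(k)<0.5-\zeta$, observes that consequently no normal neighbor $i$ makes a phase jump while $j$ transitions from $1-\zeta$ to $1$, so $\eta_{ij}(t^-)=\omega_j(t^-)/\omega_i(t^-)$ and \eqref{eqn:frequency2} multiplied by $\omega_i(t^-)$ reproduces \eqref{eqn:frequencyupdate}, and then invokes the earlier analysis verbatim. Your write-up is considerably more careful---you correctly note that the frequency sample is $\omega_j(\tau_j)$ rather than $\omega_j(t^-)$, you spell out the inductive resolution of the apparent circularity, and you flag the anonymous-pulse pairing issue---but none of this departs from the paper's strategy.
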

\begin{proof}
Based on similar arguments as in Lemma~\ref{lmm:Delta}, it is trivial to verify that under \eqref{eqn:initial2}, the following statement holds:
\begin{equation}
\Delta(k)<0.5-\zeta, \;\forall k.
\end{equation} 
This means that the phase difference between any two normal oscillators is always less than $0.5-\zeta$. Consequently, for any normal oscillator $j$, when it transitions from $1-\zeta$ to $1$, there is no phase jump for any of its normal neighbors $i\in\mathcal{N}^-_j$. As such, one obtains that
$$\eta_{ij}(t^-) = \frac{\omega_j(t^-)}{\omega_i(t^-)}.$$
Multiplying $\omega_i(t^-)$ on both sides of \eqref{eqn:frequency2} yields 
$$\omega_i(t)=a_{ii}(t)\omega_i(t^{-})+\sum_{\omega_j(t^-)\in \mathcal{J}_i(t)} a_{ij}(t) \omega_j(t^-),$$
which is identical to \eqref{eqn:frequencyupdate}. Therefore, following similar arguments presented in this paper, we conclude the proof of this theorem.
\end{proof}

Compared to \eqref{eqn:initial_k}, \eqref{eqn:initial2} imposes stricter constraints on the initial states. This is because Algorithm~\ref{alg:synchronize} assumes that the absolute frequencies are accessible through packet-based communications, while Algorithm~\ref{alg:relative} utilizes pulses to calculate relative frequencies, resulting in lower energy consumption. This observation highlights a trade-off between relaxed initial conditions and reduced communication burden.

\section{Numerical Examples}\label{sec:simulation}
Let us consider the network of $8$ oscillators which cooperate over the graph given in Fig.~\ref{fig:network}. As verified in \cite{leblanc2012resilient}, this graph is $3$-robust. Therefore, the oscillators can survive $1$-local attack.  
\begin{figure}[!htbp]
	\centering
	\includegraphics[width=0.18\textwidth]{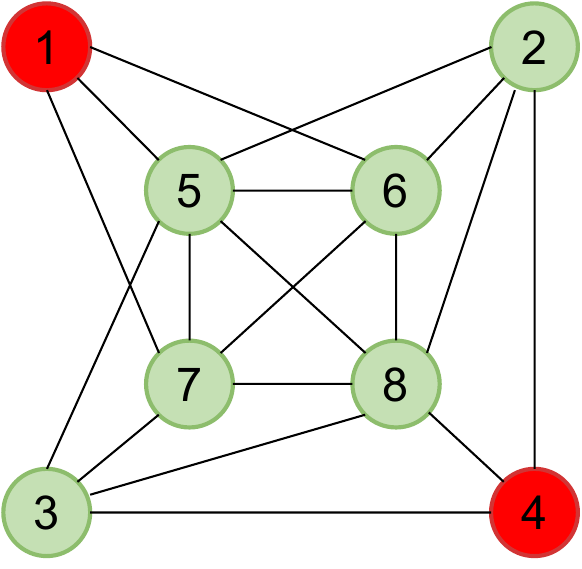}
	\caption{A network that is $3$-robust (\hspace{1pt}\cite{leblanc2012resilient}).}
	\label{fig:network}
\end{figure}

To verify this, let oscillators $1$ and $4$ be misbehaving. They intend to prevent normal oscillators from achieving synchronization by maliciously broadcasting their frequencies as 
\begin{equation}
\begin{split}
\omega_{1}(t)=1+|\sin(t)|,\; \omega_{4}(t)= 1+t-\lfloor t\rfloor.
\end{split}
\end{equation} 
Moreover, they emit attacking pulses as shown in Fig.~\ref{fig:PCO}. 

On the other hand, the normal oscillators are randomly initialized with frequencies from $[1,2]$ and phases from $[0,0.5]$. This initial condition is less strict compared to \eqref{eqn:initial}. However, as shown in Fig.~\ref{fig:PCO}, the normal oscillators can still achieve the resilient synchronization by following Algorithm~\ref{alg:synchronize}.

To further illustrate this point, we conduct a series of tests with different initial phases and frequencies. Specifically, given the range of initial phases $\Delta(0)$, we examine the maximum difference in initial frequencies $\delta(0)$, under which the proposed algorithms can either detect malicious agents or facilitate resilient synchronization among normal agents. For each $\Delta(0)$, we randomly set the initial phases of normal oscillators within $\Delta(0)$, perform the Monte Carlo trials for $1000$ runs, and calculate their average value. The results are presented in Fig.~\ref{fig:initial}. It is clear that as $\Delta(0)$ increases, the corresponding $\delta(0)$ decreases. This indicates that when the normal oscillators are widely spread in terms of initial phases, they allow less frequency difference to achieve synchronization. The findings from Fig.~\ref{fig:initial} further confirm that our algorithms can function effectively under significantly more relaxed conditions than \eqref{eqn:initial}.

\begin{figure}[!htbp]
	\centering
	\input{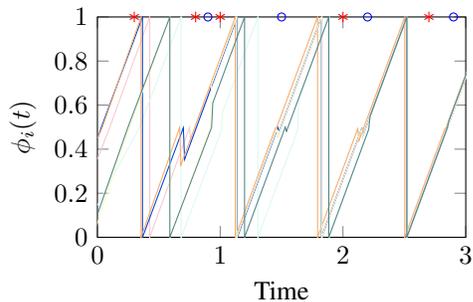}
	\caption{The time responses of the phases of normal oscillators, where the red asterisks and blue circles indicate the timings when the attacking
		pulses are sent from malicious oscillators $1$ and $4$, respectively.}
	\label{fig:PCO}
\end{figure}

\begin{figure}[!htbp]
	\centering
%
%
\begin{tikzpicture}

\begin{axis}[%
width=1.928in,
height=1.154in,
at={(1.011in,0.642in)},
scale only axis,
xmin=0,
xmax=0.5,
xlabel={{$\Delta(0)$}},
xtick = {0, 0.1,0.2,0.3,0.4,0.5},
xticklabels = {$0$,$0.1$,$0.2$,$0.3$,$0.4$,$0.5$},
ymin=1.28,
ymax=2.1,
ylabel={{$\delta(0)$}},
axis background/.style={fill=white}
]

\addplot[only marks, mark=x, mark options={}, mark size=2.0000pt, draw=blue]
table[row sep=crcr]{%
	0	2\\
	0.1	1.8\\
	0.2	1.7\\
	0.3	1.6\\
	0.4	1.35\\
	0.5 1.3\\
};

\end{axis}

\end{tikzpicture}%
	\caption{Maximum difference in initial frequencies for varying initial phase differences among normal oscillators.}
	\label{fig:initial}
\end{figure}
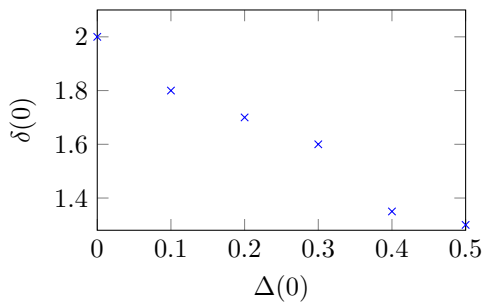

\section{Conclusion}\label{sec:conclude}
This paper has addressed the problem of pulse-coupled synchronization in an adversarial environment, where oscillators in the network have different frequencies and some of them might be manipulated by adversaries. To counter these misbehaviors, we have proposed two resilient synchronization protocols by adapting the MSR algorithm to pulse-based interactions: one relies on packet-based communication to transmit absolute frequencies, while the other operates purely with pulses to calculate relative frequencies. 
Under certain conditions on the initial values and the network topology, our protocols guarantee that normal oscillators can either detect malicious behaviors or achieve exponential synchronization in both phases and frequencies. The trade-off between relaxed initial conditions and reduced communication burden has also been discussed. 

	\bibliographystyle{IEEEtran}
	\bibliography{reference} 
	
\end{document}